\newcommand{\bes} {\begin{subequations}}
\newcommand{\ees} {\end{subequations}}
\newcommand{\ba}{\begin{eqnarray}}
\newcommand{\ea}{\end{eqnarray}}
\newtheorem{theorem}{Theorem}
\newtheorem{theorem*}{Theorem}
\newtheorem{cor}{Corollary}[theorem]
\newtheorem{lemma}[theorem]{Lemma}
\newtheorem{definition}{Definition}
\renewcommand{\t}{\lfloor \frac d 2 \rfloor}
\def\BibTeX{{\rm B\kern-.05em{\sc i\kern-.025em b}\kern-.08em
    T\kern-.1667em\lower.7ex\hbox{E}\kern-.125emX}}
\begin{document}
\title{Compressing Syndrome Measurement Sequences}
\author{\IEEEauthorblockN{Benjamin Anker\IEEEauthorrefmark{1}, Milad Marvian\IEEEauthorrefmark{1}}\\
    \IEEEauthorblockA{\IEEEauthorrefmark{1}Department of Electrical \& Computer Engineering and Center for Quantum Information and Control, University of New Mexico, Albuquerque, NM 87131, USA}
\thanks{This work is supported by NSF CAREER award No. CCF-2237356.}}

\maketitle
\begin{abstract}
    In this work, we analyze a framework for constructing fault-tolerant measurement schedules of varying lengths by combining stabilizer generators, and prove results about the distance of such schedules by combining according to classical codes. Using this framework, we produce explicit measurement schedules sufficient for fault-tolerant error correction of quantum codes of distance $d$ with $r$ independent stabilizer generators using only $O(d \log{r})$ measurements if the code is LDPC, and $O(d \log d \log r)$ measurements if the code is produced via concatenating a smaller code with itself $O(\log d)$ times. In both of these cases the number of measurements can be asymptotically fewer than the number of stabilizer generators which define the code. Although optimizing our construction to use the fewest measurements produces high-weight stabilizers, we also show that we can reduce the number of measurements used for specific examples while maintaining low-weight stabilizer measurements. We numerically examine the performance of our construction on the surface code under several noise models and demonstrate the exponential error suppression with increasing distance which is characteristic of weak fault tolerance. 
\end{abstract}
    
    \begin{IEEEkeywords}
        error correction, syndrome extraction, quantum error correction, fault tolerance
    \end{IEEEkeywords}    
    \emph{This work has been submitted to the IEEE for possible publication. Copyright may be transferred without notice, after which this version may no longer be accessible.}
    \section{Introduction}
    Although quantum computers promise asymptotic advantage over classical computers for certain computational tasks, they are extremely sensitive to noise. As such, quantum error correction is essential to perform reliable quantum computation. This fact has been recognized since the birth of the field of quantum computation and has produced a plethora of proposals for how to achieve reliable computation~\cite{shor1997FT,yoder2016concat,lidar2013qec,knill1999qec, fowler2012surface}. The best studied class of quantum error correcting codes, however, are stabilizer codes \cite{gottesman1997stabilizer}, which are defined by a set of independent commuting stabilizer generators. These stabilizer generators impose constraints on admissible states, or codestates; the set of all codestates, or the codespace, is the joint $+1$ eigenspace of the stabilizer generators. By measuring the stabilizer generators we can determine which constraints are violated and hence diagnose errors on a codestate.
    
    In physical implementations of error correcting codes, it is often the case that measurement takes much more time than one- or two-qubit gates~\cite{acharya2022suppressingquantumerrorsscaling, monroe2021faulttolerantcontrol, foxen2020continuous}. The delay caused by measurement both impacts the clock speed of the quantum computation, and allows for more errors to occur while the measurement is being performed. 
    
    The fact that measurement is slow and noisy motivates one to ask whether there are ways to perform fault-tolerant error correction using fewer measurements. This has been developed in many directions. For instance single-shot error correction \cite{bombin2015single, campbell2019singleshot,stahl2024single,brown2024single,kubica2022single,quintavalle2021single, gu2024single,hillmann2024single} aims to find a decoder that guarantees fault tolerance even when measuring each stabilizer generator only a constant number of times. Work has also been done to choose \emph{which} stabilizers are measured as a function of the previous stabilizer measurement results \cite{berthusen2025adaptive, berthusen2024partial, tansuwannont2023adaptive}. Directly optimizing specific choices of codes is also a viable path to reduce the resources needed \cite{delfosse2020short}. Of course, extracting discrete valued syndromes is a simplifying reduction from more general measurement schemes \cite{mohseninia2020continuous,kumar2019weak}. 
    
    In our work we focus almost exclusively on the number of measurements made. Since an $[[n, k, d = 2t + 1]]$ stabilizer code is defined by $r = n - k$ independent stabilizer generators, one might immediately expect that it is impossible to reduce the number of measurements required to correct $t$ errors below $r$. However, counting distinct errors and quantifying the information needed does not immediately rule this out. Suppose our goal is to distinguish any two errors with weight at most $t$ by observing which constraints they violate. 
    Since the number of errors with at most $t$ is only $\sum_{i = 0}^{t} \binom{n}{i} 3^i  = O(2^{d \log n})$, a na\"ive counting argument implies that we can distinguish these errors with $O(d \log n)$ carefully chosen measurements; this observation is similar to Gottesman's~\cite{gottesman2022review} more general estimate of the number of measurements sufficient to distinguish every space-time error, and related to lower bounds on the number of measurements necessary~\cite{nemec2023hamminglikebounddegeneratestabilizer,aly2007notequantumhammingbound,dallas2022hamming}.
    Although it is not immediately clear that the measurements necessary will be stabilizers, this proposition is  shown to be true by Delfosse et al.~\cite{delfosse2022beyond} for codes with block-length $n$ which scales polynomially with $d$, i.e. codes with parameters $[[O(d^\alpha), \star, d]]$. In this case the number of measurements required scales as $O(d \log d)$.

    The existence of such a sequence of measurements is even less obvious when we ask that the resulting series of measurements is fault tolerant \cite{shor1997FT}. With this demand, it is not enough simply to find a set of stabilizers which distinguishes every data error as our sketched estimate above showed should be possible; rather, we need to find a set of measurements which makes this distinction even in the case of measurement errors. For our purposes we focus on a minimal version of fault tolerance, namely weak fault tolerance~\cite{tansuwannont2023adaptive,delfosse2022beyond}. Roughly speaking, this ensures that low-weight errors propagate to low-weight errors, but contrary to strong fault tolerance makes no guarantee that general, high-weight, errors will produce states close to the codespace. We motivate the utility of weak fault tolerance more in Section~\ref{sec:weak_FT}.  
     Delfosse et al.~\cite{delfosse2022beyond} also prove the surprising existence of \textit{sub-single-shot} fault-tolerant quantum error correction.   
    As originally presented, the result obtained by Delfosse et al.~\cite{delfosse2022beyond} is not constructive, and it is not clear how to efficiently produce a measurement schedule which achieves this scaling.

    In this work, we provide a general framework to produce fault-tolerant measurement schedules for any $[[n, k, d]]$ code which has the property that errors of weight at most $d - 1$ have syndromes with weight at most $cd$, for a given $c$. This framework produces a measurement schedule using only $O(cd \log n)$ measurements, which reduces to the bound given by Delfosse et al. when $n = O(d^\alpha)$ and $c$ is a constant (e.g. when the code is LDPC). Our construction also applies to concatenated codes constructed by concatenating a smaller code with itself $O(\log d)$ times, which produces a bound on $c$ of $\log d$ and a bound on the total number of measurements of $O(d \log d \log n)$.
    
    Our construction is based upon the insight that bounded-weight syndromes can be interpreted as a codeword of a classical code with a bounded number of errors. By checking the parity of subsets of this codeword one can deduce the location of errors, which are just syndrome bits equal to $1$, and hence recover the syndrome of the original error on the data qubits. Checking the parity of a subset of bits is equivalent to measuring a set of stabilizers obtained by combining stabilizer generators. The fact that the syndrome of the quantum code is of bounded weight allows us to choose a classical code with a distance comparable to the quantum code and hence with relatively few parity checks, which correspond to few measurements.

    Our method of combining two codes is somewhat similar to the proof given by Delfosse et al.~\cite{delfosse2022beyond} that sub-single-shot error correction exists, in that the authors considered multiplying the parity check matrix of the code of interest by a random matrix of the correct dimensions. However, choosing a parity check matrix allows us to use the structure of the code to make precise statements about the distance of the resulting sequence of measurements.
    
    The framework we develop is general, and allows for a systematic choice of which stabilizers to measure based upon the trade-offs one is willing to make between the number of measurements and the weight of the stabilizers measured. To achieve the bound given by Delfosse et al., we choose a classical code with $O(d \log n)$ parity checks, which produces a set of stabilizers each with weight $\Theta(n)$. Other choices of classical codes, such as those examined in Section~\ref{sec:alternative}, can produce measurement schedules with more measurements but a better upper bound on the weight of the stabilizers. 
    
    This procedure is related to data-syndrome codes~\cite{ashikmhin2014robus, brown2024datasyndromebch, fujiwara2014datasyndrome}. In the data-syndrome code framework, stabilizer generators of a quantum code are also combined using a classical code, the eponymous data-syndrome code. However, in this application the focus is on accounting for measurement errors; rather than understanding the syndrome of the quantum code as an encoded state, it is understood as the codeword prior to encoding. The simplest example of this is when the codeword, or syndrome, is encoded using the repetition code by repeated rounds of syndrome extraction. For this reason the matrix they use to combine stabilizer measurements is the \emph{generator} matrix of the classical code, rather than the parity check matrix, and is related to other method for making syndrome extraction robust~\cite{premakumar20192designs}. Making syndrome extraction more reliable is also useful, and is in some sense our goal, but this framework does not allow us to make the same kind of statements about the distance and how the number of measurements change.  We expand on the connections and differences of the two methods in Section~\ref{sec:data_syn}. 
    
    Our framework is also related to the idea of metachecks~\cite{campbell2019singleshot,hillmann2024single}. Metachecks encode restrictions on the form a measurement-error-free syndrome can take, and are non-physical in the sense that they are computed in post-processing as opposed to physically measured for violation. That is, every code affected by a set of errors $E = \{e : |e| < d\}$ produces a set of syndromes $S = \{s : \exists_{e\in E} \sigma(e) = s\}$, where $\sigma(e)$ is the syndrome of $e$. Since the stabilizer generators measured to produce the checks are linear (over $GF(4)$) these syndromes can be interpreted as codewords of a classical code defined by a parity check matrix $P$ such that $Ps = 0$ for all $s \in S$. Then measurement errors $u$ produce metasyndromes since $P(s + u) = Pu$. However, these metachecks are again focused on the identification of measurement errors whereas we consider every syndrome $s$ as a version of the $\vec 0$ codeword with some errors, rather than its own distinct codeword. This allows us to find the location of ones in $s$, meaning we deduce the syndrome, instead of the location of $1$s in $u$, which correspond to measurement errors.
    
    The organization of this paper is as follows. In section~\ref{sec:background} we provide background on stabilizer codes and quantum error correction in general, as well as defining weak fault tolerance and motivating why we construct weakly fault tolerant protocols. In section~\ref{sec:main_result} we present the intuition behind our framework, as well as proving the fault-tolerance of our construction. 
    In section~\ref{sec:decoding} we outline the difficulties imposed by decoding our construction and propose solutions to some of them.
    In section~\ref{sec:surface_code} we provide an example of our construction applied to the surface code, and provide numerical results demonstrating exponential error suppression. In section~\ref{sec:alternative} we provide methods to refine the application of our construction. In section~\ref{sec:single_shot} we apply a result by Campbell~\cite{campbell2019singleshot} to show that the fact that our construction produces a set of measurements sufficient for error-correction in one round implies a construction for fault-tolerant error correction with the same number of measurements. In Section~\ref{sec:data_syn}, we clarify the relation of our work to data-syndrome codes and combine the two approaches to produce short, robust, measurement schedules. Finally, in section~\ref{sec:conclusion} we discuss the implications of our work and possible future directions.
    \section{Preliminaries}\label{sec:background}
    In this work, we focus on reducing the number of measurements required for fault-tolerant error correction using stabilizer codes; one should note that our results also directly apply to stabilizer subsystem codes. Here we provide a brief review of error correction, particularly stabilizer codes and their classical analogues linear block codes, as well as fault tolerance.
    
    \subsection{Error Correction}
    To protect information against physical errors, a common strategy is to encode the logical degrees of freedom redundantly into many physical degrees of freedom, for example, a smaller number of logical qubits into a larger number of physical qubits. On these physical qubits, we put constraints -- these constraints reduce the number of physical degrees of freedom to the number of logical degrees of freedom, but provide the ability to diagnose errors. Elements of a stabilizer group are one convenient choice of such constraints.
    
    A stabilizer group is an abelian subgroup of the Pauli group $P^n$ defined by a set of independent stabilizer generators $G = \{g_i\}$ that does not include $-I$. The code space of a stabilizer code is the joint $+1$ eigenspace of the stabilizer generators. Because the stabilizers commute, the codespace is well defined, and because $-I \not \in \langle G \rangle$ it is non-trivial. By measuring the stabilizer generators, we project to either the $+1$ or $-1$ eigenspace of each stabilizer generator; the $-1$ measurements results then tell us which stabilizer generators are violated, and hence which errors have occurred. Because of the projective stabilizer measurements, we do not need to explicitly consider the full spectrum of errors which can physically occur; rather, it is enough to correct Pauli $X$, $Y$ and $Z$ errors. This is because $I, X, Y, Z$ form a basis for all single qubit errors. If we can find a set of stabilizer generators such that each generator is either purely $X$-type or purely $Z$-type, i.e. Calderbank–Shor–Steane (CSS) quantum codes~\cite{calderbank1996css, steane1996css}, we can further reduce the problem to correcting $X$ errors with the $Z$-type checks, and vice versa. 
    
    Given a set of $r$ independent stabilizer generators on $n$ qubits, the number of logical qubits is given by $k := n - r$. One can see this by observing that each independent stabilizer generator divides the dimension of the codespace by $2$. This in turn is because each stabilizer generator has an equal number of $+1$ and $-1$ independent eigenstates. 
Since the stabilizer generators are independent, this implies that $k = n - r$.
    
    Finally, the (non-trivial) logical operators of the code are defined as members of $N(S)\setminus S$, where $N(S)$ is the normalizer of the stabilizer group $S$. That is, logical operators commute with all stabilizer generators, but are not themselves contained in the stabilizer group. The logical operators define the minimum \emph{distance} of the code -- the weight of the lowest weight logical operator, $d$, is the minimum distance of the code. Normally, we refer to the minimum distance just as the distance. Any error of weight less than $d$ can be detected by the code, and any error of weight less than $d/2$ can be corrected by the code; these two statements are equivalent.
    
    Linear block codes form a similar picture classically. As in the quantum case, we encode a relatively small number of logical bits into a relatively large number of physical bits. The constraints we put on the physical bits can be considered as stabilizers composed of a tensor product of only $Z$ operators, but it is more natural to consider them as \emph{parity checks}. A parity check, as the name suggests, checks the parity of some set of bits of the encoded message. The constraint imposed is that the parity of this set of bits should be zero. Independent parity checks are normally considered as rows of a parity check matrix, usually denoted by $H$. Then, if encoded messages are viewed as column vectors, the constraints can compactly described by saying that codewords must be in the nullspace of $H$. If a codeword $m$ suffers an error $e$, then the result of $H(m + e) = He$ is referred to as the syndrome of $e$, and tells us which constraints $e$ violates, in analogy to how measuring each stabilizer generator tells us about quantum errors.
    
    In what follows, we will show that, although a stabilizer code is \emph{defined} by $r$ stabilizer generators, it is possible to perform fault-tolerant error correction using only $O(d \log r)$ measurements, which in general can be much smaller than $r$.
    \subsection{Weak Fault Tolerance}\label{sec:weak_FT}
    In section~\ref{sec:main_result} we will present our construction for a weakly fault-tolerant measurement schedule consisting of $O(d^2 \log d)$ measurements. 
    First, we define weak fault tolerance and motivate our focus on it.
    \begin{definition}
        An error correction procedure is weakly fault tolerant to distance $d = 2t + 1$ if when given an input state which has suffered $r$ errors and suffering $s$ errors during error correction the corrected output state has at most $s$ errors for all $r, s$ such that $r + s \leq t$.
    \end{definition}
    This definition of fault-tolerance is the one we will use in the remainder of this work, except for section~\ref{sec:single_shot}.
    Although weak fault tolerance is not suitable for concatenation, which is commonly used to show that a threshold exists, a weakly fault tolerant error correction procedure still exponentially suppresses logical error rates. To be precise, if noise is local, occurring with probability $p$ on each data qubit before each stabilizer measurement and with probability $p$ on each syndrome qubit before measurement then the logical error probability scales as $p^d$, where $d$ is the distance that the error correction procedure is fault tolerant to \cite{delfosse2022beyond}. This noise model corresponds to a physical implementation of stabilizer measurements which is either transversal or uses flag qubits. Note that this does not imply that a distance $d + 2$ code is necessarily better than a distance $d$ code for any given error rate, as would be the case if $p$ were to be less than the threshold error rate for this family of codes; converting a family of codes with increasing distance into a family of weakly fault tolerant codes with increasing distance does not necessarily produce a family of codes which admits a (positive) threshold. In section~\ref{sec:surface_code} we observe behavior that suggests a threshold exists under a minimum-weight error (MWE) decoder for phenomenological noise; however, even using a less optimal but still weakly fault tolerant decoder, we observe the characteristic exponential error suppression of weak fault tolerance in Figure~\ref{fig:two_step}.
    
    We now make an observation about the number of measurements required to ensure weak fault tolerance. It is relatively easy to see that $O(d^2)$ rounds of syndrome extraction is sufficient to ensure the (strong) fault tolerance of an $[[n, k, d]]$ code \cite{shor1997FT}, but fewer rounds is sufficient for weak fault tolerance. We first prove a lemma about the detectability of errors, using the following definitions.
    \begin{definition}
        We define the accumulated data error up to round $i$, denoted by $\bar e^i$, as $\bigoplus_{e \in \bigcup_{j < i}E_j}e $ where $E_j$ is the set of all data errors that occur during round $j$ of measurement ($E_0$ being errors that occur before error correction begins).
    \end{definition}
    \begin{definition}
        We define an error $e$, consisting of errors both on data and measure (qu)bits at any point in time, as undetectable relative to some sequence of parity check (stabilizer) measurements $\{g_i\}_{i \in I}$ if, when suffering only error $e$, each $g_i$ is measured as $0$.
    \end{definition}
    \begin{lemma}\label{lem:undetectable}
        If $H$ is an $r$ by $n$ parity check matrix, then any error which is undetectable to the sequence of measurements defined by measuring the rows $h_1$ through $h_r$ in order and repeating this sequence of measurements $\gamma$ times either has weight at least $\gamma$, or satisfies $H \bar e^i = 0$ for some $i$.
    \end{lemma}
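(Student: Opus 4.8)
The plan is a pigeonhole argument over the $\gamma$ repetitions of the length-$r$ schedule. Call one pass through $h_1,\dots,h_r$ a \emph{block}, so the full schedule is $\gamma$ consecutive blocks. Suppose $e$ is undetectable and $|e| < \gamma$; I want to produce an index $i$ with $H\bar e^{\,i} = 0$. Each component of $e$ — a data fault occurring before some single measurement, or a measurement fault flipping some single outcome — occurs during exactly one block (the input faults $E_0$ occurring before block $1$). Hence at most $|e| < \gamma$ of the $\gamma$ blocks contain a fault, so some block $b$ is \emph{clean}: no data or measurement fault acts anywhere inside it.

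Next I would pin down what is observed inside a clean block. At any moment the global quantum state equals $\bar E\,|\psi_0\rangle$, where $|\psi_0\rangle$ is the initial codestate and $\bar E$ is the product of all data faults so far: data faults multiply the state by Paulis; every operator measured in the schedule lies in the stabilizer group (is a parity check, in the classical case), so $\bar E\,|\psi_0\rangle$ is already one of its eigenstates and the projective measurement neither changes the state nor has a random outcome; and measurement faults only flip classical bits. Therefore, throughout the clean block $b$ the operator $\bar E$ is constant and no outcome is flipped, so the $j$-th outcome of that block is the $0/1$ value recording whether $h_j$ commutes with $\bar E$; collecting the $r$ outcomes gives exactly $H\bar e^{\,i}$, where $i$ is the round index that begins block $b$ (so that $\bar e^{\,i}=\bar E$ is the accumulated data error entering the block). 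Undetectability forces every outcome, in particular all $r$ of these, to be $0$, hence $H\bar e^{\,i}=0$, which is the second alternative.

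It is perhaps cleanest to state this contrapositively: if $H\bar e^{\,i}\neq 0$ for every round index $i$, then in each block some fault must act — either a data fault that changes $\bar E$ inside that block, or a measurement fault that hides the otherwise-nonzero syndrome — and faults in different blocks are distinct components of $e$, so $|e|\ge\gamma$. The only step with real content is the middle one, namely that in a fault-free block the observed bits are precisely the data syndrome $H\bar e^{\,i}$; this rests on the standard fact that a stabilizer (parity-check) measurement on a Pauli-corrupted codestate is deterministic and non-disturbing. The remaining bookkeeping — that each fault lands in a single block, that input faults $E_0$ do not spoil block $1$, and that $\bar E$ at the start of a clean block is literally the $\bar e^{\,i}$ of the lemma for the appropriate $i$ — is routine, and I expect getting that accounting exactly right to be the only place requiring care.
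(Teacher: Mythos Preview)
Your proposal is correct and follows essentially the same approach as the paper: find a repetition (block) with no faults via pigeonhole, then argue that the outcomes in that block equal $H\bar e^{\,i}$ and hence must vanish by undetectability. The paper states this in the contrapositive form you also outline, and your write-up is if anything more careful in justifying why a clean block returns exactly the accumulated syndrome.
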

    \begin{proof}
        Suppose the error is undetectable and $H\bar e^i \not = 0$ for all $i$, i.e. the accumulated data error never becomes a logical operator. We will show that at least one error must occur during each of the $\gamma$ repetitions in this case and hence the total weight of the error must be at least $\gamma$. Suppose that there is one repetition which does not contain an error. Before measuring $h_1$ the accumulated data error satisfies $H\bar e^i \not = 0$, so at least one of the measurements anticommutes with $\bar e^i$. We assume that no errors occur during this repetition, so after measuring each parity check we produce a non-zero syndrome. This contradicts the assumption that $e$ is undetectable to this sequence of measurements, so each repetition must suffer at least one error, meaning there must be at least $\gamma$ errors if $H\bar e^i \not = 0$ for all $i$.
    \end{proof}
    Saying that $H\bar e^i \not = 0$ is equivalent to saying that the accumulated data error is not a logical operator of the code defined by $H$ at any time. This is motivated by the fact that a trivial logical operator does not need to be detected, and a non-trivial logical operator is necessarily of high weight. In particular, repeating the full set of parity checks for a distance $d$ code $d$ times is enough for any error of weight at most $d - 1$ to be detectable, since logical operators have weight at least $d$ by definition.
    
    The fact that any error of weight less than $d$ is detectable shows that this sequence of measurements forms an error correction protocol that is weakly fault tolerant to distance $d$ when combined with a Minimum-Weight-Error (MWE) decoder by appealing to the circuit distance, which satisfies the following lemma. 
    \begin{lemma}[Circuit Distance, Delfosse et al. \cite{delfosse2022beyond}]
        For any circuit error $\varepsilon$ such that $|\varepsilon| \leq (d_\text{circ} - 1)/2$ the MWE decoder  corrects the input error.
        \end{lemma}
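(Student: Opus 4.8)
The plan is to prove this by the standard minimum-distance decoding argument, transported to the spacetime (circuit) picture. First I would make the objects precise. A noisy measurement circuit determines a finite set of elementary fault locations $\mathcal F$ (e.g.\ a Pauli error on a data or measure qubit at a given time step, or a measurement outcome flip); a \emph{circuit error} is a subset $\varepsilon \subseteq \mathcal F$ with weight $|\varepsilon|$ equal to its cardinality. Each elementary fault $f$ carries a \emph{detector syndrome} $\sigma(f)$ -- the set of parity-check comparisons it flips, including comparisons between repeated rounds in the sense of Lemma~\ref{lem:undetectable} -- and a \emph{logical effect} $\ell(f)$ on the output, taken modulo the stabilizer group of the output code; both extend $\mathbb F_2$-linearly over $\mathcal F$, so $\sigma(\varepsilon\oplus\varepsilon') = \sigma(\varepsilon)\oplus\sigma(\varepsilon')$ and likewise for $\ell$. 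The circuit distance is then $d_\text{circ} = \min\{\,|\varepsilon| : \varepsilon\subseteq\mathcal F,\ \sigma(\varepsilon) = 0,\ \ell(\varepsilon)\neq 0\,\}$, the least weight of a fault pattern that is undetectable yet logically nontrivial, and the MWE decoder on observing syndrome $s$ returns some $\hat\varepsilon \in \arg\min\{|\varepsilon'| : \sigma(\varepsilon') = s\}$.

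The core argument is short. Suppose the true circuit error $\varepsilon$ satisfies $|\varepsilon| \le (d_\text{circ}-1)/2$ and produces syndrome $s = \sigma(\varepsilon)$. Since $\varepsilon$ is itself a weight-$|\varepsilon|$ preimage of $s$, the decoder's output obeys $|\hat\varepsilon| \le |\varepsilon| \le (d_\text{circ}-1)/2$. Consider the residual $\varepsilon \oplus \hat\varepsilon$: it has $\sigma(\varepsilon\oplus\hat\varepsilon) = s \oplus s = 0$, and by the union bound on fault sets $|\varepsilon\oplus\hat\varepsilon| \le |\varepsilon| + |\hat\varepsilon| \le d_\text{circ} - 1 < d_\text{circ}$. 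By the definition of $d_\text{circ}$, any undetectable fault of weight strictly below $d_\text{circ}$ has trivial logical effect, so $\ell(\varepsilon\oplus\hat\varepsilon) = 0$. Hence applying $\hat\varepsilon$ as the correction leaves the logical content of the output state equal to that of the fault-free circuit: the input error is corrected. This is exactly the classical "distance $d$ corrects $\lfloor(d-1)/2\rfloor$ errors" argument, with $\sigma$ in place of the parity-check map and $\ell$ in place of the quotient by the code.

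The step I expect to require the most care is not the counting but the bookkeeping that legitimizes it: (i) verifying that $\sigma$ and $\ell$ are genuinely $\mathbb F_2$-linear \emph{after} propagating faults through the (Clifford) circuit, so that composing the true error with the correction behaves additively and the residual detector syndrome vanishes exactly; (ii) checking that the weight minimized by the MWE decoder is the same spacetime weight that appears in the definition of $d_\text{circ}$, so there is no mismatch between what the decoder optimizes and what the distance certifies; and (iii) confirming that $\ell$ is well defined modulo the output stabilizer group, so that the statement $\ell(\varepsilon\oplus\hat\varepsilon) = 0$ faithfully captures "the residual is a harmless stabilizer-equivalent fault." Once these points are established, the inequality $|\varepsilon| + |\hat\varepsilon| \le d_\text{circ} - 1$ does all the work.
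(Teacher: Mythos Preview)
Your argument is correct and is exactly the standard minimum-distance decoding argument the lemma invokes. The paper does not supply its own proof of this statement---it is quoted from Delfosse et al.\ and accompanied only by the remark that ``it is intuitive that any error of weight less than half the circuit distance is correctable by a MWE decoder''---so your write-up is a faithful expansion of that one-line justification, with the bookkeeping caveats (linearity of $\sigma$ and $\ell$ over propagated Clifford faults, matching weight conventions, well-definedness of $\ell$ modulo stabilizers) correctly flagged.
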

    The circuit distance $d_\text{circ}$ is analogous to the distance of a quantum code in that it is defined as the weight of the minimum weight undetectable error in the circuit. Therefore it is intuitive that any error of weight less than half the circuit distance is correctable by a MWE decoder, as the lemma states.

    \section{Compressed Syndromes}\label{sec:main_result}
    We now introduce the intuition for our method of combining the performance of two codes through compressing the syndromes of the first code with the parity check matrix of the second.
    Consider the parity check matrix $H_{r\times n}$ for some code of interest. Our goal is to find some sequence of measurements with length $r'<r$ consisting of linear combinations of rows of $H$.
    That is, we wish to find a matrix $P_{r' \times r}$ such that the measurement matrix $H_m = PH$ has fewer rows than $H$, but has the same asymptotic distance.
    
    We observe that we can relate the distance of $H_m$ to the distances of $P$ and $H$. Instead of directly considering the measurements defined by $H_m$, we consider $P$ acting upon the syndrome $s = He$ for each error $e$. If $P$ is the parity check matrix for another code and $s$ is sufficiently close to some fixed codeword of the code defined by $P$, then it compresses this syndrome, allowing us to identify the location of each bit set to $1$. Identifying the non-zero bits  of the syndrome of course just tells us which stabilizer generators anticommutes with the error. We now formalize this observation.
    \begin{definition}
        We define the maximum syndrome weight of a code as
        \ba
        w= \max_{|e|< d} |H e|,
        \ea
        where $H$ is the parity check matrix of the code and $d$ is its distance, and $|\cdot|$ denotes the Hamming weight.
    \end{definition}

    \begin{theorem}\label{lem:detectability}
        Let $H_{r, n}$ be the parity check matrix for an $[n, k, d]$ code with maximum syndrome weight $w$, and let $P_{r', r}$ be the parity check matrix for an $[r, \star, w + 1]$ code. Then if $PHe = 0$ either $|e| \geq d$ or $He = 0$.
    \end{theorem}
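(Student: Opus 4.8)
The plan is to argue by contraposition on the weight of $e$: assuming $|e| < d$, I will show that $PHe = 0$ forces $He = 0$, which together with the trivial alternative $|e| \geq d$ gives the stated dichotomy. First I would unpack the definition of maximum syndrome weight. Since $|e| < d$, the syndrome $s := He$ is a vector of length $r$ whose Hamming weight satisfies $|s| \leq w$, by the definition of $w$ as the maximum over all errors of weight below $d$.

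Next I would rewrite the hypothesis. The condition $PHe = 0$ is exactly the statement $Ps = 0$, i.e. $s$ lies in the null space of $P$. By construction $P$ is the parity check matrix of an $[r,\star,w+1]$ code, so its null space is precisely the codespace of that code, and every \emph{nonzero} codeword of it has Hamming weight at least $w+1$ by the definition of distance. But we have just seen $|s| \leq w < w+1$, so $s$ cannot be a nonzero codeword; the only remaining possibility consistent with $Ps = 0$ is $s = 0$, that is, $He = 0$. This completes the implication, and hence the theorem.

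I do not expect a real obstacle here — the argument is essentially a one-line distance/counting observation — so the only thing to be careful about is bookkeeping. Specifically, one must check that the syndrome is genuinely an admissible input to $P$ (a binary vector of the right length $r$, which is why $P$ is taken to be $r' \times r$), that the distance bound for $P$'s code is invoked only for nonzero codewords (the all-zero syndrome is always permitted, and corresponds to the $He = 0$ branch), and that the weight comparison is the strict one $w < w+1$. It is also worth remarking that the same reasoning goes through verbatim when $H$ is the symplectic check matrix of a stabilizer (or subsystem) code rather than a classical parity check matrix, since the syndrome is still a binary length-$r$ vector on which $P$ acts over $GF(2)$; this is the setting actually used in the rest of the paper.
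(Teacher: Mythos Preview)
Your argument is correct and is essentially identical to the paper's own proof: both assume $|e|<d$ (equivalently, that $He\neq 0$ and $|e|<d$), use the definition of maximum syndrome weight to bound $|He|\leq w$, and then invoke the distance $w+1$ of the code defined by $P$ to conclude $He=0$. If anything, your write-up is slightly more careful about the bookkeeping (strictness of the inequality and the zero-codeword case) than the paper's terse version.
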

    \begin{proof}
        Suppose that $He \not = 0$ and we wish to show  $|e| < d$ for the sake of contradiction. We have $0 < |He|\leq w + 1$ by assumption. Denoting $He$ by $s$, we have $Ps = 0$ for $0 < |s| < w + 1$. But this contradicts the assumption that $P$ defines a code with distance $w + 1$. So $|e| \geq d$.
    \end{proof}
    The new total number of measurements required is equal to $r'$ which a priori has no relation to $r$, meaning that we can reduce the number of measurements needed for detecting errors.
    This lemma can easily be extended to CSS codes by considering $H_x$ and $H_z$ separately, or to general quantum stabilizer or subsystem codes by defining the stabilizers by a matrix over $GF(4)$.

    In section~\ref{sec:surface_code} we will use the parity check matrix $P$ from the classical Bose-Chaudhari-Hocquenghem (BCH)~\cite{bose1960bch, hocquenghem1959bch} code to compress the measurement sequence. We describe the BCH code in more detail in appendix~\ref{appendix:BCH}.
    We choose this code because it allows us to achieve the asymptotic scaling of $r'=O(w \log (r))$ in the number of measurements, leading to an asymptotic reduction in the required number of measurements.

    We now consider the application of this scheme to two important families of codes: LDPC codes and concatenated codes.

    By definition, each bit (or qubit) in an LDPC code is involved in only a constant number of checks, which implies that the maximum syndrome weight remains asymptotically small.

    \begin{lemma}
        For any  distance $d$ code in which each (qu)bit participates in at most $c$ many checks, 
        the maximum syndrome weight $w$ satisfies  $w \leq c(d - 1)$.
    \end{lemma}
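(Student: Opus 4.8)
The plan is to bound $|He|$ directly in terms of the support of $e$. First I would write the syndrome as a sum of columns: if $H^{(j)}$ denotes the $j$-th column of $H$ and $\mathrm{supp}(e)$ is the set of coordinates on which $e$ acts nontrivially, then $He = \bigoplus_{j \in \mathrm{supp}(e)} H^{(j)}$. The point of the hypothesis is that the $j$-th column of $H$ records exactly which parity checks (stabilizer generators) the $j$-th (qu)bit participates in, so ``each (qu)bit participates in at most $c$ checks'' says precisely that every column $H^{(j)}$ has Hamming weight at most $c$.

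Next I would invoke subadditivity of the Hamming weight (the triangle inequality for the Hamming metric): $|He| = \big|\bigoplus_{j \in \mathrm{supp}(e)} H^{(j)}\big| \le \sum_{j \in \mathrm{supp}(e)} |H^{(j)}| \le c\,|\mathrm{supp}(e)| = c\,|e|$. Since the maximum syndrome weight is defined as $w = \max_{|e| < d} |He|$, and any error with $|e| < d$ has $|e| \le d-1$, taking the maximum gives $w \le c(d-1)$, as claimed.

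I do not expect a genuine obstacle here; the only things to remark on are (i) that the bound is in general not tight, since cancellation in the XOR of columns can only lower the weight, and (ii) the quantum case, where for a CSS code one runs the argument separately on $H_x$ and $H_z$, and for a general stabilizer or subsystem code one uses the check matrix over $GF(4)$ — in every case a single-qubit error contributes a column whose weight is the number of generators that qubit appears in, which is at most $c$, so the identical counting applies.
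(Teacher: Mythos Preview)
Your proof is correct and follows essentially the same approach as the paper: bound the syndrome of a single-(qu)bit error by $c$ via the column-weight interpretation of the hypothesis, apply the triangle inequality for Hamming weight to get $|He|\le c|e|$, and then use $|e|\le d-1$. The paper's version is slightly terser but identical in substance.
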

    \begin{proof}
        Any weight one error is in the support of at most a constant number of checks $c$, and hence produces syndrome of weight at most $c$. By the triangle equality this implies that any error $e$ produces a syndrome of weight at most $c|e|$. By assumption $c|e| \leq c (d - 1)$.
    \end{proof}

     In Section \ref{sec:surface_code}, we analyze in detail the performance of the code obtained by applying our scheme to the surface code, a notable example of a qLDPC code. LDPC codes are not the only family of codes which satisfy the sufficient condition that low weight errors have low weight syndromes; concatenated codes also enjoy such a property.
    \begin{lemma}\label{lem:concat_wt}
        For a code concatenated with itself $m$ times where each qubit in the base code is in the support of at most $c$ checks, any error $e$ has a syndrome of weight at most $|e|cm$.
    \end{lemma}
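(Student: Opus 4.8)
The plan is to reduce the statement, exactly as in the preceding lemma, to a bound on the number of stabilizer generators a single physical qubit participates in, and then to induct on the number of concatenation levels $m$. I will prove the stronger claim that, with respect to the generating set that the concatenated structure naturally supplies, every physical qubit of the $m$-times concatenated code lies in the support of at most $cm$ stabilizer generators. Granting this, the syndrome bound is immediate: writing $e$ as a product of single-qubit Paulis $e = \prod_q e_q$, the syndrome satisfies $\sigma(e) = \bigoplus_q \sigma(e_q)$ over $GF(2)$, and each $\sigma(e_q)$ has Hamming weight at most $cm$, so $|\sigma(e)| \le \sum_{q} |\sigma(e_q)| \le |e|\, cm$ by the triangle inequality, which is the same move used in the previous lemma.

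For the induction, view the $m$-times concatenated code $C^{(m)}$ as the base code $C$ with each of its $n$ physical qubits encoded into a separate block of $C^{(m-1)}$. A natural generating set for $C^{(m)}$ then breaks into two families: (i) the generators of the outer copy of $C$, each \emph{promoted} by replacing, for every qubit in its support, a logical Pauli representative of $C^{(m-1)}$ supported on the corresponding block; and (ii) for each of the $n$ blocks, the entire generating set of $C^{(m-1)}$ acting within that block. The base case $m = 1$ is the hypothesis itself, that each qubit of $C$ lies in at most $c$ checks.

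Fix a physical qubit $q$ of $C^{(m)}$ and let $\mathcal{B}$ be the unique block containing it, a copy of $C^{(m-1)}$. The family-(ii) generators containing $q$ are precisely the generators of $C^{(m-1)}$ inside $\mathcal{B}$ that contain $q$ --- at most $c(m-1)$ of them, by the inductive hypothesis. A family-(i) generator can contain $q$ only if the outer generator it comes from acts nontrivially on the logical qubit encoded by $\mathcal{B}$; since each qubit slot of $C$ participates in at most $c$ of $C$'s generators, at most $c$ promoted generators touch $\mathcal{B}$ at all, and hence at most $c$ of them contain $q$. Adding the two contributions gives at most $c(m-1) + c = cm$, which closes the induction.

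The only real friction I anticipate is bookkeeping around the concatenated generating set --- in particular checking that promotion through a (non-canonical) choice of logical representatives cannot inflate the per-qubit generator count, the point being that it is the \emph{number} of outer generators touching a given slot, not the weight of the substituted logical operators, that controls the family-(i) contribution. One can also avoid the explicit induction by noting that each physical qubit lies in a nested chain of $m$ blocks, one per level, and is exposed to at most $c$ generators at each level, for a total of $cm$; the inductive phrasing above is just a way of making that counting precise.
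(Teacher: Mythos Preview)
Your proof is correct and follows essentially the same approach as the paper: reduce to single-qubit errors via the triangle inequality, then induct on the number of concatenation levels, observing that a fixed physical qubit lies in exactly one block at each level and hence is touched by at most $c$ additional generators per level. The paper's writeup is terser and phrases the inductive step in terms of anticommutation rather than support, but the structure and the key counting observation are the same as yours.
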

    \begin{proof}
        We first reduce to the case of single qubit errors giving syndromes of weight at most $cm$ by the triangle inequality.
        We proceed by induction on $m$. For $m = 1$, this condition clearly holds. Now, assuming the assertion holds for $m = x$ we consider the case of $m = x + 1$. Consider the set of stabilizer generators at level $x + 1$ obtained by measuring logical operators at level $x$. It is enough to show that at most $c$ of these anticommute with the error in question. Since we only consider a single qubit error, it must be in the support of only one logical qubit at the top level. By assumption, each logical qubit is only in the support of at most $c$ stabilizers. Therefore at most $c$ of the stabilizers at the $(x + 1)$th level anticommute with the error in question, and the total syndrome weight is at most $|e|cm$.
    \end{proof}
    \begin{cor}
        An $[[N, k, D]]$ code produced by concatenating a $[[n, k, d]]$ with itself $\log_d D$ layers, has maximum syndrome weight of $w \leq c D \log_d{D}$.
    \end{cor}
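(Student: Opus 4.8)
\emph{Proof proposal.} The plan is to read this off directly from Lemma~\ref{lem:concat_wt} once the number of concatenation layers is pinned down. First I would recall the standard fact that concatenating an $[[n,k,d]]$ code with itself multiplies the distance at each level, so that after $m$ layers the resulting code has distance $d^m$: an undetectable error must act nontrivially on at least one encoded block at every level of the concatenation, and peeling off the encoding at each level costs a factor of $d$ in weight. Setting $d^m = D$ therefore forces $m = \log_d D$, which is precisely the number of layers named in the statement.

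With $m = \log_d D$ in hand, I would apply Lemma~\ref{lem:concat_wt} verbatim: since each (qu)bit of the base code participates in at most $c$ checks, every error $e$ on the $m$-fold concatenated code has syndrome weight at most $|e|\,c\,m = |e|\,c\log_d D$. Specializing the definition of maximum syndrome weight, $w = \max_{|e| < D}|He|$, and using $|e| \le D-1$ for the errors that matter, gives $w \le (D-1)\,c\log_d D \le c D \log_d D$, which is the claimed bound.

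There is no genuine obstacle here; the only two points that warrant a sentence are (i) justifying that $m = \log_d D$ layers are exactly what is needed to reach distance $D$, i.e.\ that the $m$-fold self-concatenation has distance $d^m$, and (ii) the cosmetic issue that $\log_d D$ need not be an integer, in which case one takes $\lceil \log_d D \rceil$ layers and absorbs the resulting bounded multiplicative factor into the (already loose) constant, leaving the asymptotic statement unchanged.
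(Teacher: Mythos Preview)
Your proposal is correct and matches the paper's intended reasoning: the corollary is stated immediately after Lemma~\ref{lem:concat_wt} with no separate proof, precisely because it is read off by substituting $m=\log_d D$ and bounding $|e|\le D-1$ in the inequality $|He|\le |e|cm$, exactly as you do. Your remark about $\log_d D$ possibly failing to be an integer is more careful than the paper itself, which implicitly takes $D=d^m$ so that $m$ is an integer by construction.
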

    That is, using Theorem~\ref{lem:detectability} we can pick $P$ with a distance only $c \log_d{D}$ times the desired distance to produce a measurement scheme for concatenated codes. Since the number of independent stabilizers for a concatenated code scales as $n^\ell$ (concatenating an $[[n, k, d]]$ code with itself $\ell$ times), this leaves room for certain codes to save many measurements. By Lemma~\ref{lem:undetectable}, repeating the measurement schedules produced by this construction $d$ times is sufficient to produce a measurement schedule weakly fault tolerant to distance $d$.
    
   We now identify the asymptotic regimes where our construction reduces the number of required measurements for concatenated codes. (In section~\ref{sec:surface_code} we demonstrate the performance of the scheme on small codes.) Suppose we apply our construction to a code produced by concatenating an $[[n, k, d]]$ code with itself $m$ times. Since our construction requires $O(m d^m \log n^m)$ measurements per round, if we are to reduce the number of measurements below the number of stabilizer generators, we need $m d^m \log n^m  < n^m$, or $m^2 d^m \log n < n^m$. Suppressing constants, we need $\left(\frac n d\right)^m$ to scale faster than $m^2$, or $m \log \frac n d > 2 \log m$. Clearly this holds for all parameters, showing that our scheme gives a measurement number advantage at some level of concatenation. However, the level of concatenation necessary depends heavily on the ratio between $n$ and $d$ in the base code.
    
    As an example, first we consider the $[[5^m, 1, 3^m]]$ code. For this code, the level of concatenation necessary to see a performance advantage is $m = 15$. Clearly this is not a practical regime for concatenation, since it corresponds to a distance of several million. For the $[[7^m, 1, 3^m]]$ concatenated Steane code, the obtained code is less impractical in that after $6$ levels of concatenation our construction yields a measurement advantage. For practical levels of concatenation, perhaps $m = 3, d = 3$, we could concatenate a code with approximately $10$ physical qubits in order to see a measurement advantage.

    The surprisingly large asymptotic reduction in the number of required measurements is achieved at the expense of potentially increased measurement complexity and in particular increase in the weight of the measured stabilizers. 
    Depending on the classical code used to compress the syndrome, the weights of the prescribed stabilizers may be much higher than the original stabilizer generator weights. For example, the $[n - k, \star, \star]$ BCH code generically has parity checks of weight $\Theta(n - k) = \Theta(n)$, meaning that in the worst case the stabilizers produced by our construction have weight $\Theta(n)$. 
    
    This is not an insurmountable challenge. We observe that the freedom to choose the representative of the parity-check matrix can be exploited to reduce the weight of each measured stabilizer. Furthermore, by carefully ordering the stabilizer generators of the quantum code we can ensure that the combination of stabilizer generators chosen cancel well. In Section~\ref{sec:distance_preserving} we also show that for certain codes, the impact of the high weight stabilizers can be mitigated by exploiting the structure of the quantum code.
    
    \section{Decoding}\label{sec:decoding}
    Our proof of a weakly fault-tolerant error correction procedure implicitly assumes the use of a fault-tolerant decoder. In the simulations in Section~\ref{sec:surface_code}, we use an in-house MWE decoder to decode the syndromes produced by our construction. This decoder is fault tolerant, but is also extremely inefficient, in that it is a brute force decoder. Should one wish to implement this construction in practice, it is necessary to develop a more efficient decoder. 
    
    Heuristically one might expect that decoding the outer classical code to obtain the syndrome for the inner code, then using a standard quantum decoder to decode this syndrome is sufficient. However this is not necessarily the case because of the impact of measurement errors. Consider obtaining a syndrome $s'$ from measuring the stabilizers produced by our construction, which in the absence of measurement errors can be decoded to produce $s$, the syndrome of the quantum code. However, in the presence of measurement errors $e_m$, we instead observe $s' + e_m$. Even if the measurement error $e_m$ is low weight, decoding $s' + e_m$ can produce a syndrome $\sigma$ which is far from $s$, i.e. $|\sigma + s| > |e_m|$. This means that this two-step decoding procedure is not necessarily fault-tolerant.
    
    However, by adjusting the form of the stabilizers, a two-step decoder can be made resistant to measurement errors.
    
    \begin{theorem}
    Let $H$ be a parity check matrix of a code with  maximum syndrome weight $w$ and $P_c$ be the canonical form of  the (compressing) parity check with distance $w$, i.e., $P_c = [I | P']$. 
    The two-step decoder described above applied to the measurement matrix $H_c = P_c H$ can correct all errors of up to weight $(w - m)/2$ (which are not mid-circuit errors) together with $m$ measurement errors.
    \end{theorem}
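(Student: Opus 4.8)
The plan is to trace a generic data error $e$ (not mid-circuit, with $|e|\le(w-m)/2$) and a generic measurement error $e_m$ (with $|e_m|\le m$) through the two decoding steps, using the canonical form of $P_c$ to keep the measurement error from being amplified, and then to balance the combined weight split $2\cdot\tfrac{w-m}{2}+m=w$ against the distance $w$ of the outer code. Because the data error is already present when the round of $H_c$ is measured, its whole effect on the outcomes is the true syndrome $s:=He$, so the observed word is $y=H_c e+e_m=P_c s+e_m\in\{0,1\}^{r'}$; since $|e|<d$ we at least have $|s|\le w$.

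The structural heart is the single identity the canonical form buys us. Writing $\hat e_m:=(e_m\mid 0)\in\{0,1\}^{r}$ for the zero-padding of the measurement error, we have $P_c\hat e_m=e_m$ with $|\hat e_m|=|e_m|\le m$, hence $y=P_c(s\oplus\hat e_m)$: the word we observe is exactly the outer-code syndrome of $s\oplus\hat e_m$, a copy of the true syndrome flipped in at most $m$ coordinates, all of them among the first $r'$. Step~1 of the decoder therefore runs minimum-weight (coset-leader) decoding of the outer code on $y$; since $P_c$ has distance $w$ this is unambiguous, and it returns $s\oplus\hat e_m$ once $|s\oplus\hat e_m|\le|s|+|e_m|$ lies inside the unique-decoding radius $\lfloor(w-1)/2\rfloor$. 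What makes the canonical representative the right one to measure is that the correction it forces the outer decoder to apply is exactly $\hat e_m$: so Step~1 outputs a syndrome estimate $\hat s$ differing from the true $He$ only in $\le m$ of the first $r'$ positions — in contrast to a generic $P$, whose outer decoder can inflate $\le m$ measurement errors into a syndrome estimate that is wrong in far more than $m$ places, the failure mode flagged just before the theorem.

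Step~2 feeds $\hat s$ to a quantum minimum-weight decoder for $H$, which may assume its input is the true syndrome corrupted by $\le m$ flips located in the first $r'$ coordinates; this is the familiar task of recovering a data error of weight $\le(w-m)/2$ under $\le m$ syndrome errors. I would finish by bounding any competing explanation: if a logically inequivalent $e'$ of the allowed weight, together with a measurement error $e_m'$ of weight $\le m$, also produced $y$, then $P_c H(e\oplus e')=e_m\oplus e_m'$, so by the zero-padding identity $H(e\oplus e')\oplus(e_m\oplus e_m'\mid 0)$ is an outer-code codeword; its distance-$w$ guarantee then pins the weight of $e\oplus e'$ and of the syndrome error it induces against the $2\cdot\tfrac{w-m}{2}+m=w$ budget, forcing $e\oplus e'$ to be a stabilizer. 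Hence the two-step decoder returns $e$ up to a stabilizer, i.e.\ it corrects the input. The step I expect to be the main obstacle is precisely this weight accounting — making the inequalities of Steps~1 and~2 (the outer radius $\lfloor(w-1)/2\rfloor$ and the quantum decoding radius against $|He|$, $(w-m)/2$, $m$ and $w$) close consistently, which is also what pins down the parameter regime in which the statement is useful; everything else is the purely structural fact that the canonical form splits the compressed-syndrome problem into one classical and one quantum bounded-distance decoding.
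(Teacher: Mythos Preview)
Your approach is essentially the same as the paper's: both pivot on the observation that the canonical form $P_c=[I\mid P']$ lets you lift the measurement error $e_m$ to a same-weight preimage $\hat e_m=(e_m\mid 0)$ (the paper calls it $e_m'$), rewrite the observed word as $P_c(s\oplus\hat e_m)$, run the outer MWE decoder to recover $s\oplus\hat e_m$, and then hand the inner quantum decoder the true syndrome $s$ corrupted by at most $m$ flips. Your additional competing-explanation argument and your explicit flag on the weight accounting are more careful than the paper's own proof, which simply asserts that $|s+e_m'|\le w$ suffices for the outer decoder without pinning down the unique-decoding radius; so your caution about that step is well placed, but the structural content is identical.
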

    \begin{proof}
        Suppose we suffer an error $e$, with a component $e_d$ on the data qubits and a component $e_m$ on the measurement qubits. We assume that $|e| \leq d/2$ and that $e_d$ consists only of errors between rounds of syndrome extraction (or before all rounds).

        Now consider measuring the stabilizers defined by $H_c$ and applying a MWE classical decoder to the measured syndrome. The observed syndrome will be of the form $P_c s + e_m$, where $s = H e_d$. 
        Because $P_c$ is in canonical form, we can find a bitstring $e_m'$ such that $P_c e_m' = e_m$ and $|e_m'| \leq |e_m|$.
        We can then rewrite $P_cs + e_m$ as $P_c(s + e_m')$. Since $s + e_m$ is at most weight $w$ by assumption, a minimum weight decoder applied to the observed syndrome can perfectly identify the locations of ones in $s + e_m'$, that is, deduce the syndrome $s$ with at most $|e_m'|$ errors.

        Therefore the $|e_m|$ physical measurement errors can be interpreted as measurement errors obtained from measuring stabilizer generators when passing the decoded syndrome to the quantum decoder. Since we assume the quantum decoder to be resistant to measurement errors of up to weight $d/2$ the entire two-step decoder is fault-tolerant under this noise model.
    \end{proof}
    The reason this decoder fails  when considering mid-circuit errors is that a mid-circuit error is generically only equivalent to at most $r/2$ measurement errors (instead of a constant number of measurement errors), meaning $e_m$ (and hence $e_m'$) may not be low-weight. Equivalently, mid-circuit errors produce syndromes which are far from the all-zeros codeword the classical code expects. It also may be illuminating to write $P_c(s + e_m')$ as $P_c(H e_d + e_m')$. Then we can intuitively see how the term we get from decoding the outer syndrome according to $P_c$ is just the syndrome of the data error, $He_d$, with some measurement errors $e_m'$.
    
    In numerical experiments, we observe that this two-step decoder does correctly handle data errors and measurement errors up to half the distance of the code. These numerical experiments are explored in section~
    \ref{sec:two_step_numerics}.

    \section{Example: surface code}\label{sec:surface_code}
    Arguably the most famous LDPC code is the surface code. A surface code of distance $d$ is defined by $d^2$ stabilizer measurements. Here we use the BCH  code to choose the stabilizers we measure to preserve a surface code state. The distance $2d$ BCH code on $d^2$ bits has $\lceil 2d \log_2(d^2 + 1)\rceil$ parity checks, meaning that by compressing the surface code with the BCH code we only need $O(d \log d)$ measurements rather than $O(d^2)$ measurements needed to even define the surface code. Repeating the resulting sequence of measurements $d$ times we achieve weak fault tolerance to distance $d$ by measuring $O(d^2 \log d)$ surface code stabilizers.
    
    For example, the distance $17$ surface code on $289$ qubits is defined by $288$ independent generators which are usually measured ${\sim}17$ times for a total of $4896$ measurements to produce a positive threshold. Using our construction, it is enough to measure $278\times 17 = 4726$ operators to ensure weak fault-tolerance.
    
    \subsection{Numerical Experiments on the Surface Code}\label{sec:numerics}
    While the sequence of measurements produced by our construction is weakly fault tolerant, this does not mean that there exists a positive threshold under circuit-level noise for arbitrary circuits implementing these measurements, similar to the fact that a threshold is not guaranteed under arbitrary decoding as we saw in section~\ref{sec:decoding}. However, repeated rounds of weakly fault-tolerant syndrome extraction are sufficient to imply exponential noise suppression when considering either phenomenological noise, or a noise model in which errors are uncorrelated, as in section~\ref{sec:weak_FT}. The latter noise model roughly corresponds to circuit-noise when stabilizer measurements are performed transversally or with the help of flag qubits~\cite{reichardt2018flag})~\cite{delfosse2022beyond}. The remainder of this section is devoted to numerical experiments examining the performance of our construction under various noise models.
    
    In our simulations, we measure all stabilizer operators produced by our construction, and attempt to determine whether there has been a logical error based on the syndrome. Deducing which logical error occurred is enough to correct the physical errors which may have occurred, since we can always return to the codespace by finding any error with the same syndrome. After returning to the codespace, we just need to undo any logical operations applied. In practice, both of these operations are usually delayed as long as possible, until a non-Clifford gate is to be applied, only updating the Pauli frame in the classical decoder up until that point \cite{chamberland2018frame, knill2005realistically}.
     Since the surface code is symmetric with respect to $X$ and $Z$ checks, for our simulations we just measure the $X$ checks produced by our construction and attempt to correct logical $\overline Z$ errors. The code used for these simulations uses Stim~\cite{gidney2021stim} and is available on \href{https://lobogit.unm.edu/banker/number-of-measurements}{LoboGit}. Throughout all experiments to estimate the logical error rate at the physical error rate of $p$ we have taken at least $10\times 1/p$ samples. 
    
    First, we consider the simple case of the code-capacity threshold, to measure how one round of this measurement schedule performs \emph{as a code}, ignoring fault-tolerance for the moment. In this noise model, we only introduce noise on the data qubits, and only before any part of the circuit for the single round of syndrome extraction is implemented. In Figure~\ref{fig:surface_code_cap} we observe an apparent threshold at approximately $15\%$ for this scenario.
    \begin{figure}[ht]
    \centering
    \includegraphics[width=0.5\textwidth]{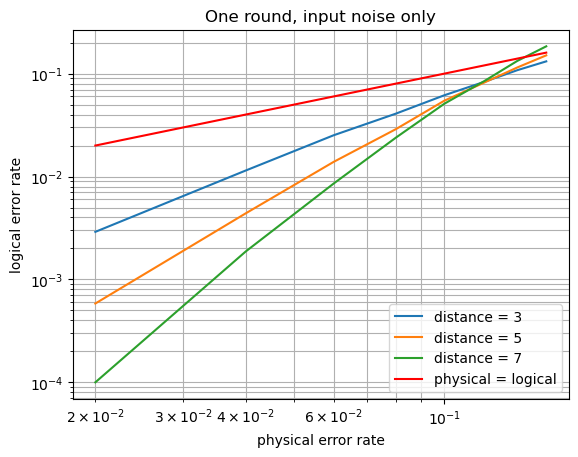}
    \caption{\label{fig:surface_code_cap} Logical vs physical error rates under code-capacity noise for distance $3, 5$ and $7$ using one round of our proposed measurement schedule.}
    \end{figure}
    
    We then move to the phenomenological noise model, i.e. where depolarizing noise on the data qubits with strength $p$ still comes before each round of measurement, but where measurement errors are flipped with probability $p$ as well. In this model, we see a lower threshold, as we should expect to see. Figure~\ref{fig:surface_code_phenom} shows the observed threshold for both one round of syndrome extraction for distances $3, 5, 7$ and $3$ rounds of syndrome extraction for distance $3$; we have omitted higher distances due to computational constraints. Interestingly, a threshold appears to exist even when restricting ourselves to a single round of syndrome extraction. We posit that this is because of the fact that single qubit data errors produce high weight syndromes in our construction, while measurement errors produce weight one syndromes; this distinction allows a MWE decoder to separate the cases relatively easily. 
    \begin{figure}[ht]
    \centering
    \subfloat[][One Round]{
        \includegraphics[width=0.23\textwidth]{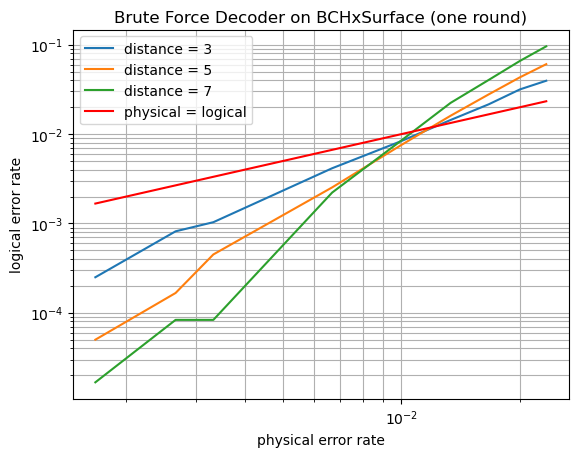}
    }
    \subfloat[][$d$ rounds]{
        \includegraphics[width=0.23\textwidth]{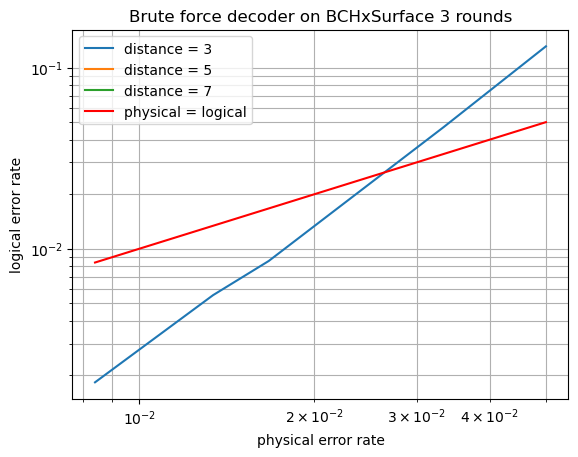}
    }
    \caption{\label{fig:surface_code_phenom} For one round we observe a threshold of approximately $1\%$, while for $d=3$ rounds we observe a pseudo-threshold for distance $3$ at approximately $3\%$. 
    }
    \end{figure}
    
    Finally, we consider the following noise model: before each stabilizer measurement, single qubit depolarizing noise with parameter $p$ is applied to every qubit -- in addition, measurement errors occur with probability $p$. This roughly models the scenario in which syndrome extraction is performed transversally, or with the assistance of flag qubits, so that errors on the syndrome qubit(s) do not propagate to the data. Due to computational constraints imposed by the MWE decoder, we have not simulated the performance of our measurement schedule for distances higher than $3$, but observe a pseudo-threshold in Figure~\ref{fig:surface_code_uncor}. 
    \begin{figure}[ht]
    \centering
    \includegraphics[width=0.5\textwidth]{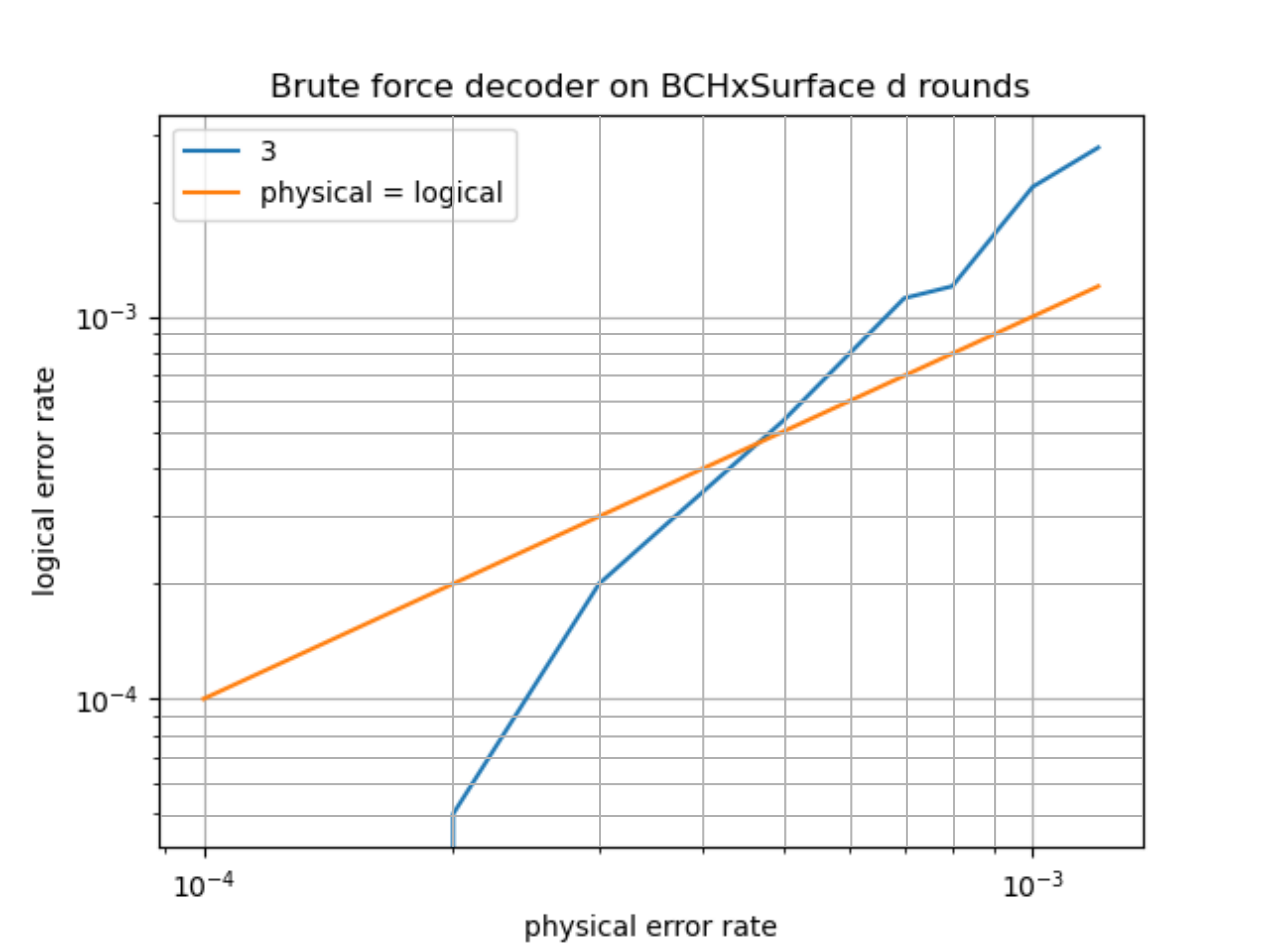}
    \caption{\label{fig:surface_code_uncor} For the noise model consisting of uncorrelated noise on all qubits before each stabilizer measurement and uncorrelated measurement bit flips, we observe a pseudo-threshold at approximately $0.03\%$ for distance $3$.}
    \end{figure}
    \subsection{Numerical Experiments using an Efficient Decoder}\label{sec:two_step_numerics}
    Interestingly, however, we do not observe the same qualitative behavior using the two-step decoder outlined in section~\ref{sec:decoding} as we observe using the brute-force decoder. 
    Instead, there are broad ranges of parameters for which the distance $d$ measurement schedule performs better than the distance $d + 2$ measurement schedule, while both perform better than an unencoded qubit, as shown in Figures~\ref{fig:two_step} and~\ref{fig:two_step_full_noise}. 
    
    \begin{figure}
        \includegraphics[width=\columnwidth]{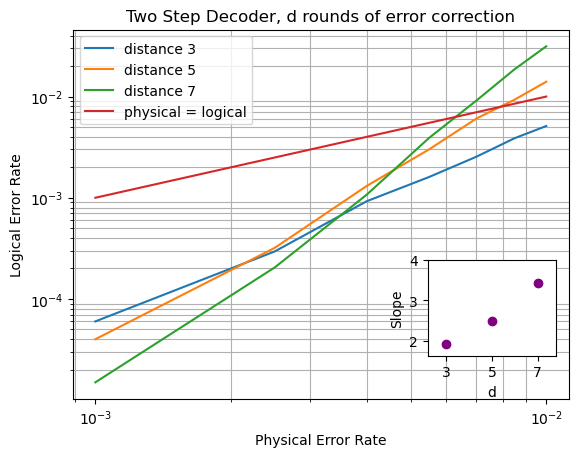}
        \caption{\label{fig:two_step}An example of exponential error suppression without a threshold, where we have used an efficient two-step decoder. This experiment uses a noise model with noise at the beginning of each round and before each measurement, repeated over $d$ rounds of error correction. Slopes inset.}
    \end{figure}
    \begin{figure}
        \includegraphics[width=\columnwidth]{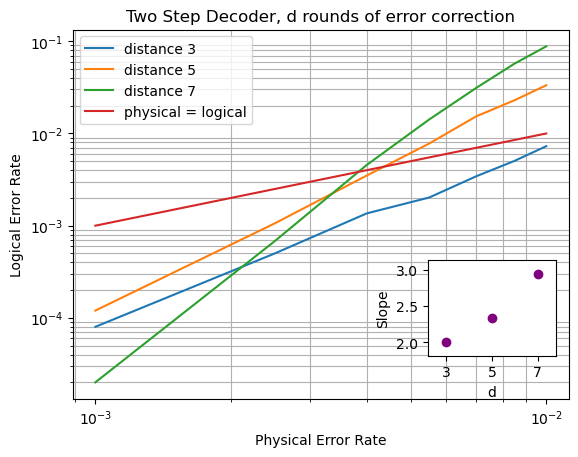}
        \caption{\label{fig:two_step_full_noise} Another example of exponential error suppression without a threshold using an efficient two step decoder. This experiment uses a noise model with one- or two-qubit depolarizing noise after every Clifford, and bitflip noise on every measurement result. Slopes inset.}
    \end{figure}
    
    This is due to the interaction between the classical and quantum decoders; in particular, the fact that the classical decoder performs relatively poorly on the noise model induced by $H$. This in turn is because standard implementations of a classical MWE decoder are blind to the noise model, only finding the minimum weight correction in terms of number of corrected bits. In Figure~\ref{fig:induced}, we have isolated the performance of the classical decoder as it operates on the induced noise model.
    \begin{figure}
        \includegraphics[width=\columnwidth]{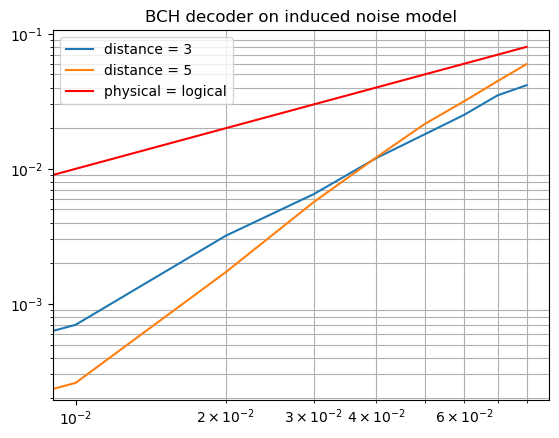}
        \caption{\label{fig:induced} The performance of the classical decoder on the noise model induced by considering syndromes with respect to the surface code with iid errors. The qualitative similarity to Figure~\ref{fig:two_step}, i.e. the large regime in which distance-3 has a lower error rate than distance-5, which has a lower error rate than the distance-1/unencoded state, is due to the fact that the majority of the failures to decode correctly at low error rates are caused by the classical decoder.}
    \end{figure}
    
    For example, suppose the classical decoder correctly produces the original syndrome with respect to stabilizer generators as long as it is at most weight $d/2$ and incorrectly identifies the syndrome otherwise. This is an approximation of the actual behavior of a minimum weight error decoder for the BCH code of distance $d$. Then the classical decoder produces incorrect syndromes for the quantum decoder to operate on, for any error consisting of greater than $d/4$ localized patches of errors, and correct syndromes otherwise. The surface code decoder correctly identifies the Pauli frame when each patch of localized errors is no larger than $d/2$. Therefore, the effect of the two decoders combined is roughly to limit the number of error patches to $d/4$ and the size of each error patch to $d/2$. This is in contrast to taking either decoder independently, which would give a larger number of correctable errors -- in particular, a matching based quantum decoder~\cite{higgott2021pymatching} can correct any number of localized patches of error on the surface code, leading to its naturally high threshold~\cite{dennis2002memory}.
    
    Ideally, one would use an efficient classical decoder which gives MWE corrections with respect to the weight of the error which produces that (quantum) syndrome, not the syndrome itself. In general such a decoder is hard to create, though, and we leave it as an open question whether such a decoder exists in general.
    \subsection{Stabilizer Measurement Form}\label{sec:distance_preserving}
    The simulation results in section~\ref{sec:numerics} suggest that, with sufficient care given to the implementation of the stabilizer measurements, $\Theta(d^2 \log(d))$ stabilizer measurements are sufficient to produce a exponential error suppression for the surface code, compared to the standard $\Theta(d^3)$ measurements of the stabilizer generators. In fact, for the surface code, our construction can be made to inherit the property that Shor style or flagged syndrome measurements are unnecessary with the correct CNOT/CZ ordering. 
    
    For the surface code, one can use the geometric structure of the logical operators to ensure that the only error which propagates from the syndrome qubit to an error with weight greater than one on the data is perpendicular to any logical operator it lies in the support of. This is illustrated in Figure~\ref{fig:surface_stabilizer_measurement}. This means that errors on the syndrome qubit contribute at most one error to a logical error, no matter where they occur. This is actually a more general property of hypergraph product codes~\cite{manes2023distancepreserving}; the rotated surface code we have been considering is closely related to one such hypergraph product code, namely the (unrotated) surface code.
    
    The stabilizers we measure are chosen according to the multiplication of a classical parity check matrix to the stabilizer matrix, producing a stabilizer as the product of stabilizer generators. By neglecting to cancel consecutive CNOTs with the same control and target in each such product when implementing these measurements, as in Figure~\ref{fig:construction_FT_form}, we can ensure that errors on the syndrome qubit have the same perpendicularity property in our construction. Implementing the measurements this way, however, may require more CNOTs than measuring the original stabilizer generators, and the relative importance of ancilla qubits versus two qubit operations must be weighed.
    \begin{figure*}[ht]
        \centering
        \raisebox{-0.5\height}{\includegraphics[width=0.3\textwidth]{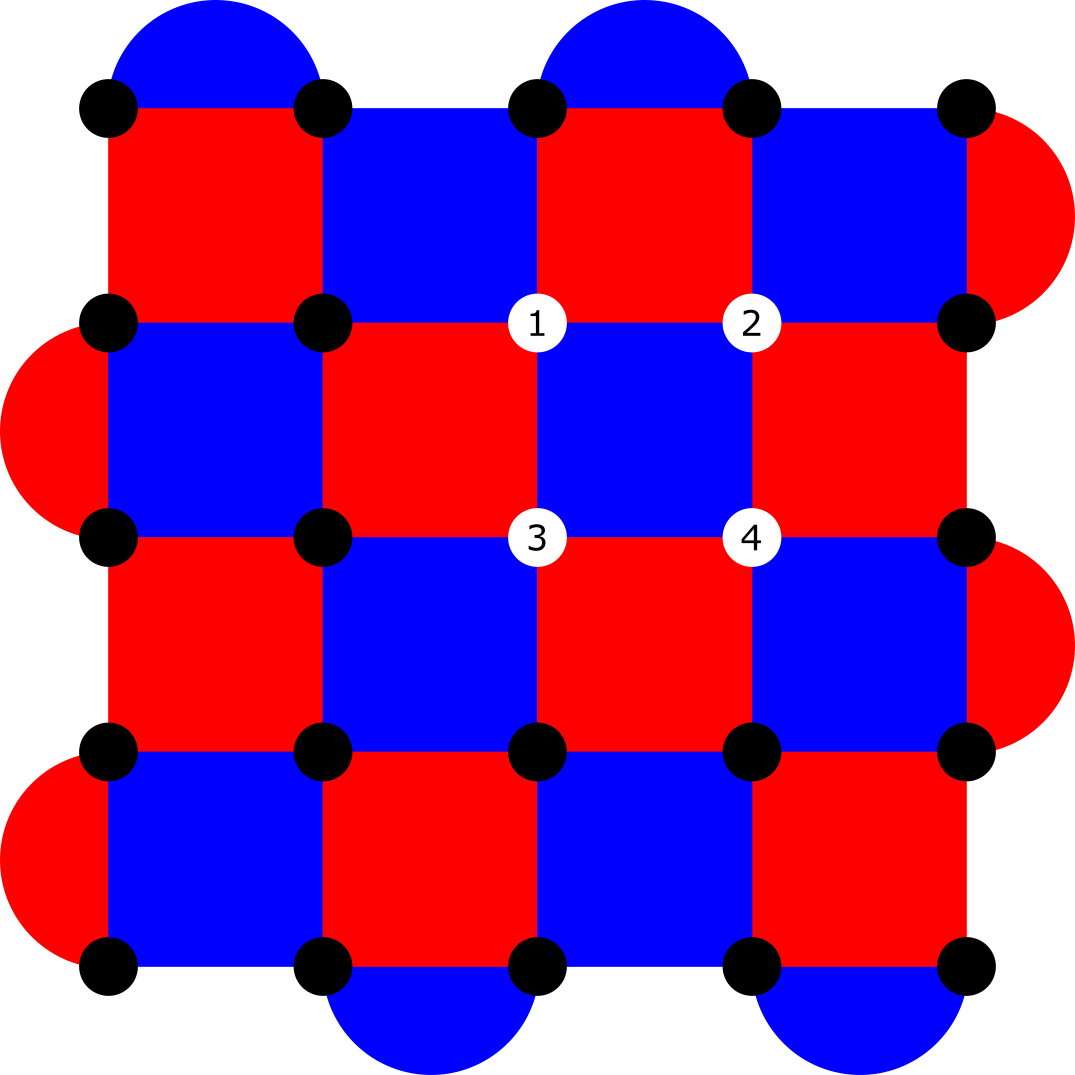}}
        \raisebox{-0.5\height}{\resizebox{0.55\textwidth}{!}{
\hspace{3.2em}
\begin{tabular}{c}
    \vspace{-.6em}\\
    \Qcircuit @C=1em @R=0.3em @!R{
        \lstick{\ket 0} &\gate{\alpha}& \targ    &\gate{\beta}& \targ     &\gate{\gamma}& \targ     &\gate{\delta}& \targ   &\gate{\lambda}& \measureD{Z} \\
        \lstick{A}      &\qw&\ctrl{-1} &\qw& \qw       &\qw& \qw       &\qw& \qw     &\qw& \qw & \\
        \lstick{B}      &\qw&\qw       &\qw& \ctrl{-2} &\qw& \qw       &\qw& \qw     &\qw& \qw & \\
        \lstick{C}      &\qw& \qw      &\qw& \qw       &\qw& \ctrl{-3} &\qw& \qw     &\qw& \qw & \\
        \lstick{D}      &\qw& \qw      &\qw& \qw       &\qw& \qw       &\qw&\ctrl{-4}&\qw& \qw & \\~\\
}
\vspace{.2em}    
\\  
\end{tabular}
}}
    \caption{\label{fig:surface_stabilizer_measurement} Consider a surface code, with $X$ checks shown in red and $Z$ checks shown in blue. The logical $Z$ operator consists of physical $Z$ operators on a vertical line of qubits. If we perform the circuit on the right with qubits A, B, C, D corresponding to qubits on the surface code patch labeled 1, 2, 3, 4, we perform a measurement of the $Z$ stabilizer. No matter which order we measure, a $Z$ error at locations $\alpha, \beta, \delta$ or $\lambda$ will propagate to a weight $0$ or $1$ error up to a stabilizer. However, a $Z$ error at location $\gamma$ propagates to both qubits C and D. If we connect qubits such that the last two qubits measured are $3$ and $4$ (i.e. qubits C and D are qubits $3$ and $4$), however, this only contributes one error to the logical $Z$ operator, and is distance preserving and effectively fault tolerant.} 
    \end{figure*}
    
    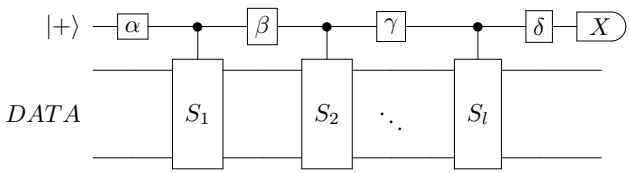
\begin{figure}[ht]
    \centering
            \resizebox{0.5\textwidth}{!}{
            
\hspace{3.2em}
\begin{tabular}{c}
    \vspace{-.6em}\\
    \Qcircuit @C=1em @R=0.3em @!R{
        \lstick{\ket +} &\gate{\alpha}& \ctrl{1}          &\gate{\beta} &  \ctrl{1}         &\gate{\gamma} & \qw        & \ctrl{1}          &\gate{\delta}& \measureD{X} \\
                        &\qw          &\multigate{2}{S_1} &\qw          & \multigate{2}{S_2}&\qw           & \qw        &\multigate{2}{S_l} &\qw           & \qw & \\
        \lstick{DATA}   &             &                   &             &                   &\ddots        &            &                   &              &     & \\
                        &\qw          &\ghost{S_1}        &\qw          & \ghost{S_2}       &\qw           & \qw        &\ghost{S_l}        &\qw           & \qw & 
}
\vspace{.2em}    
\end{tabular}
}
    \caption{\label{fig:construction_FT_form} This circuit measures the product $\prod_{1}^l S_l$. If each $S_i$ is implemented according to the CNOT schedule given in Figure~\ref{fig:surface_stabilizer_measurement}, errors happening during each $S_i$ will contribute at most one error to a logical operator, while errors at locations $\alpha, \beta, \gamma, \delta$ will propagate to stabilizers, making this measurement circuit fault-tolerant.}
    \end{figure}
    \section{Alternative Compression Schemes}\label{sec:alternative}
    Thus far, we have focused on compressing the results for all $\ell = n - k$ stabilizer generators using either one or two classical codes; one for $H_x$ and one for $H_z$ when we consider CSS codes, or one overall when we consider the code over $GF(4)$. From Theorem~\ref{lem:detectability} this requires a code of distance $2d$ to preserve a distance of $d$ for the surface code, since each qubit is in the support of up to two stabilizer generators of the same type. However, we can also apply our compression scheme on subsets of the stabilizer generators. We have already done this by considering $H_x$ and $H_z$ separately for the surface code, but this is not the limit on the subsets we can consider. 
    \subsection{Alternative Surface Code Compression}
    One example for the surface code is given by breaking $H_x$ into two subsets of approximately equal numbers of stabilizer generators, where the first subset $A$ consists of all $X$ stabilizer generators in odd columns of the surface code, and the second subset $B$ consists of all $X$ stabilizer generators in even columns, as illustrated in Figure~\ref{fig:subsets}. We can then apply our construction to $A$ and $B$ separately, producing two sets of stabilizers. Because any two stabilizer generators $a_1, a_2 \in A$ do not share any qubits in their support, an error $e$ produces a syndrome of weight at most $|e|$ with respect to $A$, the same being true for $B$. Because the weight of the produced syndromes is bounded by $|e|$ instead of $2|e|$, we can compress each subset with a code of distance $d$ instead of $2d$.
    
    \begin{figure}
        \centering
        \includegraphics[width=0.8\columnwidth]{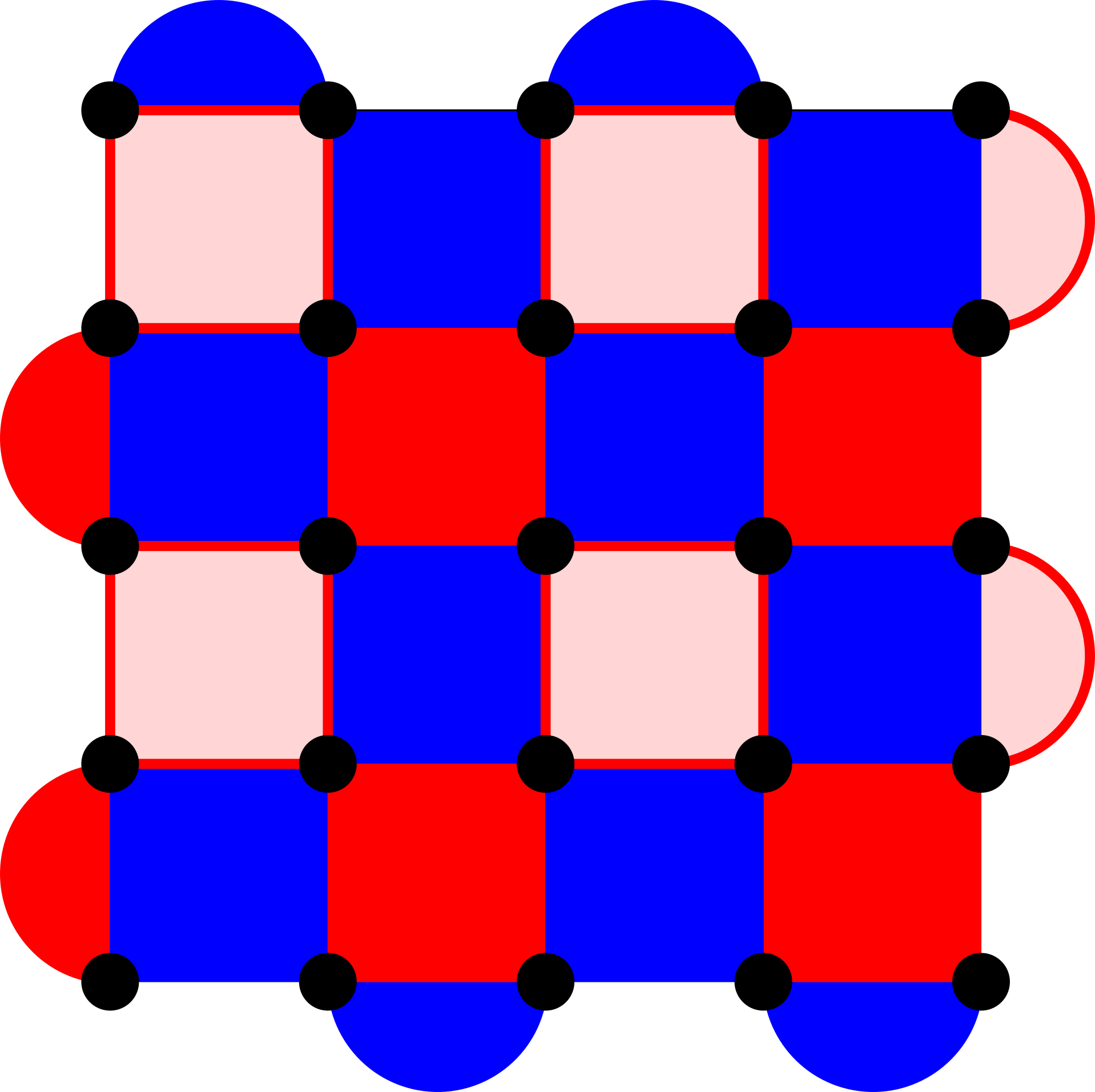}
        \caption{\label{fig:subsets} A partitioning of the $X$-type stabilizer generators of the surface code into disjoint subsets. One subset consists of the pink plaquettes outlined in red, while the complementary set consists of the solid red plaquettes.}
    \end{figure}
    
    Applying the same technique to $H_z$, this scheme then requires $\t \log_2(\frac{d^2 - 1}{4})$ measurements for each of the $4$ subsets compressed separately for a total of $4\t \log_2(\frac{d^2 - 1}{4})$. This is fewer than the total of $4\t \log_2(\frac{n - k}{2})$ measurements given by adding the two sets of $2\t \log_2(\frac{n - k}{2})$ measurements required in the original formulation of our construction for each of the two subsets compressed.
    
    Considering subsets of stabilizer generators with disjoint supports allows for different strategies for compression. For example, we can partition $H_x$ into subsets $\{A_i\}_{i = 0}^{(d - 1)/2}$ where $X$ stabilizers in rows $i$ and $i + \frac{d -1}2$ belong to subset $A_i$. Under this partitioning, an error $e$ can only produce a syndrome of weight at most $|e|$ with respect to each $A_i$. For the distance-$5$ surface code, the partitioning is the same as previously given in Figure~\ref{fig:subsets}.
    
    Then for all $d > 3$ applying the repetition code to each $A_i$ reduces the number of measurements necessary for each subset by $1$. The reduction for each subset is because we can apply a repetition code of distance $d + 1$, which uses $d$ checks to obtain (via decoding the repetition code) the measurement values of $d + 1$ stabilizers. Combining stabilizers according to the repetition code only increases the weight of the stabilizers measured by $4$. Applying the same partitioning to the columns of $Z$ stabilizers yields a total decrease of $d - 1$ measurements for a distance $d$ surface code which would normally require $d^2 - 1$ measurements. While this is not as efficient as the construction based upon the BCH code in terms of the number of measurements required, the fact that the stabilizers are of lower weight makes this construction more practical in many cases. 
    \subsection{LDPC Codes}
    The fact that we can consider subsets of stabilizer generators with disjoint supports is actually a general feature of LDPC codes.
    \begin{lemma}
        Given a set of $\ell = n - k$ LDPC stabilizer generators such that each qubit is in the support of at most $c$ stabilizer generators, we can partition the stabilizer generators into $O(c)$ subsets $\{A_i\}$ such that for any $a_1, a_2 \in A_i$ we have $\text{supp}(a_1) \cap \text{supp}(a_2) = \emptyset$, where $\text{supp}$ denotes the support.
    \end{lemma}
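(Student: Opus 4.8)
The plan is to recast this as a graph-coloring statement about the stabilizer generators and then invoke the standard greedy coloring bound. First I would build the \emph{conflict graph} $\mathcal{G}$ whose vertices are the $\ell$ stabilizer generators, placing an edge between $g$ and $g'$ exactly when $\text{supp}(g)\cap\text{supp}(g')\neq\emptyset$. A proper vertex coloring of $\mathcal{G}$ is, by definition, an assignment of labels to generators so that adjacent generators get distinct labels; hence the color classes of any proper coloring are precisely the subsets $A_i$ we want, since within a single class no two generators share a qubit. So it suffices to exhibit a proper coloring of $\mathcal{G}$ using $O(c)$ colors.

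Next I would bound the maximum degree of $\mathcal{G}$. Let $w$ denote the maximum weight of a stabilizer generator, which is an absolute constant for an LDPC family. Fix a generator $g$. Any $g'$ adjacent to $g$ must contain in its support at least one of the at most $w$ qubits of $\text{supp}(g)$, and by hypothesis each such qubit lies in the support of at most $c$ generators, hence at most $c-1$ generators other than $g$. Summing over the qubits of $g$ gives $\deg_{\mathcal{G}}(g)\le w(c-1)$, so $\Delta(\mathcal{G})\le w(c-1)=O(c)$.

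Finally I would apply the elementary fact that a graph of maximum degree $\Delta$ admits a proper coloring with $\Delta+1$ colors: process the vertices in any fixed order and assign each the least-indexed color not already used by one of its neighbors; at most $\Delta$ colors are forbidden, so one of the first $\Delta+1$ is always free. This partitions the generators into at most $w(c-1)+1=O(c)$ subsets, each with the required pairwise-disjoint-support property, completing the argument.

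I do not anticipate a real obstacle here; the one point that needs care is that the $O(c)$ bound genuinely relies on the bounded \emph{check weight} $w$ of an LDPC code, not merely on the bounded \emph{qubit degree} $c$ — without a bound on $w$ the conflict-graph degree, and hence the number of subsets, could be as large as $\Theta(wc)$. I would also remark that this partition need not be balanced; if balanced parts are desirable for the compression scheme, one can further split the largest classes at the cost of only a constant-factor increase in the number of subsets.
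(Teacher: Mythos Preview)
Your argument is correct. Both you and the paper proceed greedily, but along different routes. The paper fills the subsets one at a time: it grows $A_1$ into a maximal collection of pairwise-disjoint generators, asserts that each generator added to $A_1$ eliminates at most $c$ candidates (so $|A_1|=\Omega(\ell/c)$), and then iterates on the remainder. You instead pass to the conflict graph once, bound its maximum degree by $w(c-1)$, and quote the $\Delta+1$ greedy-coloring bound. Your formulation is the cleaner of the two: it makes the role of the bounded check weight $w$ explicit (the paper's ``remove at most $c$'' step tacitly uses it), and it yields the $O(c)$ count directly, without needing to argue that iterated maximal-independent-set extraction terminates in $O(c)$ rounds. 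The paper's route, by contrast, naturally comes with a lower bound on the size of the first subset, which feeds conveniently into the measurement-count estimates that follow; your coloring gives no such size guarantee without the rebalancing step you mention at the end.
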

    \begin{proof}
        We proceed in a greedy fashion. Assign an arbitrary stabilizer $a_1$ to subset $A_1$. Repeat the following until no stabilizers satisfy the disjoint support condition: assign the candidate stabilizer $a$ to the subset $A_1$ if $\text{supp}(a) \cap \text{supp}(a_j) = \emptyset$ for all $a_j \in A_i$. Each time we add a stabilizer to subset $A_1$, we remove at most $c$ stabilizers from consideration. Therefore $|A_1| = \Omega(\frac{n - k} c)$. With the remaining stabilizer generators, repeat this entire process with subsets $A_2$ through $A_{O(c)}$ until no stabilizer generators remain unassigned.
    \end{proof}
    
    As long as each of the subsets is large enough, that is as long as $n \geq 2cd$, each of the subsets chosen in this lemma provides sufficient bits for a distance $d$ classical code to exist.
    
    Compressing each subset separately as described reduces the number of measurements required at least from $c d \log(n - k)$ to $c d \log(\frac{n - k}c)$ for a total of at least $cd\log c$ fewer measurements when using the BCH code, although it does not change the asymptotic scaling. It also changes the weight of stabilizers measured from approximately $n - k$ to approximately $\frac{n - k} c$.
    \subsection{Concatenated Codes}\label{sec:concatenated_codes}
    Similarly to LDPC codes, we can find a partition of the stabilizer generators of a $[[N = n^m, 1, D \geq d^m]]$ concatenated code such that each subset has the property that any qubit is in the support of a constant number of stabilizer generators in the given subset.
    \begin{lemma}
        If \[A_i = \{\text{all level $i$ stabilizer generators}\},\] where a level $i$ stabilizer generator acts upon level $i - 1$ qubits, then for any qubit $q$ we have \[\{g_a | \text{supp}{(g_a)} \cap {q} \not = \emptyset, 0 \leq 0 \leq n - k\} \cap A_i \leq c.\]
    \end{lemma}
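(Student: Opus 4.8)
The plan is to reduce the claim to the check-degree hypothesis on the base $[[n,k,d]]$ code by unwinding the recursive block structure of concatenation. First I would fix notation: after $j$-fold concatenation the $N=n^m$ physical qubits are partitioned into $n^{m-j}$ disjoint \emph{level-$j$ blocks}, each consisting of $n^j$ physical qubits and carrying one level-$j$ logical qubit; the level-$j$ blocks refine the level-$(j+1)$ blocks, each level-$(j+1)$ block being a union of exactly $n$ level-$j$ blocks which play the role of the $n$ ``qubits'' to which the base code is applied at the top of that block. A level-$i$ stabilizer generator is then, by definition, obtained from a single stabilizer generator $g$ of the base code by placing it inside one level-$i$ block $B$: each letter of $g$ acting on position $p\in\{1,\dots,n\}$ is replaced by a representative of the corresponding logical Pauli of the $(i-1)$-fold concatenated code, supported entirely within the $p$-th level-$(i-1)$ sub-block of $B$. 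The key structural fact I would extract here is that the physical support of a level-$i$ stabilizer generator is contained in the union of those level-$(i-1)$ sub-blocks on which its underlying base-code generator acts nontrivially, and these sub-blocks are pairwise disjoint.

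With this in hand the counting is immediate. Fix an arbitrary physical qubit $q$ and follow the unique chain $q\in B_{i-1}\subset B_i$, where $B_{i-1}$ is the level-$(i-1)$ block containing $q$ and $B_i$ the level-$i$ block containing $B_{i-1}$; let $p^\star$ be the position of $B_{i-1}$ among the $n$ level-$(i-1)$ sub-blocks of $B_i$. For a level-$i$ stabilizer generator $g_a$ to have $q$ in its support it must (i) live in the block $B_i$, since generators in other level-$i$ blocks have support disjoint from $B_i\supset\{q\}$, and (ii) have underlying base-code generator supported on position $p^\star$, since otherwise its physical support misses $B_{i-1}$ entirely. By the hypothesis that each qubit of the base code participates in at most $c$ stabilizer generators, at most $c$ base-code generators satisfy (ii), and within the fixed block $B_i$ these are in bijection with the level-$i$ stabilizer generators there. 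Hence $|\{g_a : \text{supp}(g_a)\cap\{q\}\neq\emptyset\}\cap A_i|\le c$, which is the assertion.

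The only place one could slip is step one: one must verify that a level-$i$ stabilizer generator's physical support does not ``leak'' outside the sub-blocks dictated by its base-code generator, which holds because each logical-operator representative used is supported within a single level-$(i-1)$ block, and that the level-$j$ blocks genuinely partition the qubits at every level, which is immediate from the definition of self-concatenation; everything else is bookkeeping. I would also note that the displayed expression in the lemma statement contains a typographical slip (``$0\le0\le n-k$'' and a missing cardinality) and should read $|\{g_a : \text{supp}(g_a)\cap\{q\}\neq\emptyset\}\cap A_i|\le c$; the argument above proves this reading, and together with Lemma~\ref{lem:concat_wt} it is exactly what licenses applying the compression construction level-by-level to concatenated codes.
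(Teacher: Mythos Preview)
Your argument is correct and is essentially the same as the paper's: fix $q$, observe that it lies in a unique level-$(i-1)$ block (``logical qubit''), and invoke the base-code check-degree bound $c$ on that block's position within its parent level-$i$ block. The paper's proof compresses all of your block-structure bookkeeping into two sentences, while you spell out explicitly why the support of a level-$i$ generator cannot leak outside the sub-blocks picked out by its underlying base-code generator; your noting of the typographical slip in the displayed set is also apt.
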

    To clarify with examples, a level $1$ stabilizer acts upon physical qubits, a level $2$ stabilizer is composed of logical operators of the bottom level code, and so forth.
    \begin{proof}
        Consider an arbitrary qubit $q$. For any level $i$, $q$ is one of the physical qubits making up exactly one of the logical qubits at level $i$. Such a qubit is acted upon by at most $c$ level $i$ stabilizer generators by definition.
    \end{proof}
    Therefore, partitioning into subsets in this manner allows us to compress stabilizer generators from each level separately, using $m$ classical codes each with distance $O(d)$. We now can count the number of measurements required for this scheme. 
    
    It is clear that $|A_i| = \left(\frac{n - 1} n \right)^i (N - 1)$. That is, the first subset includes $\frac{n - 1}n$ of the total stabilizer generators, the next $\frac{n -1} n$ of the remaining stabilizer generators, and similarly for the rest of the subsets. Then the total number of measurements given by applying a BCH code to each subset, measuring stabilizer generators separately when the subsets become small enough so that the BCH code is counterproductive, is at most
    \begin{align*}
        \sum_{i = 1}^{m} cD \log(|A_i|) \leq \\
        m cD \log\left(\frac{n - 1}n (N - 1)\right) < \\
        cD \log D \log N.
    \end{align*}
    At worst, this is the same number of measurements as we proposed in our previous scheme, but could be far fewer measurements for certain values of $n$ and $m$. For instance, if we only apply compression to the stabilizers in the first level and measure the remaining stabilizer generators separately, then the total number of measurements is just $cD \log\left(\frac{n - 1} n N\right) + \frac{1}{n} N$.
    
    \subsection{Tetrahedral Code}
    We can also use this framework to rediscover an already known fact from a new perspective. The tetrahedral code, or the 3D color code of distance three, is a CSS code defined on 15 qubits arranged in a tetrahedral shape \cite{bombin2006color, bombin2015color}. The $X$ stabilizers are weight-8, and the $Z$ stabilizers are weight $4$. The precise shape of the stabilizers is outlined in Figure~\ref{fig:tetrahedral}.
    \begin{figure}
        \centering
        \includegraphics[width=0.9\columnwidth]{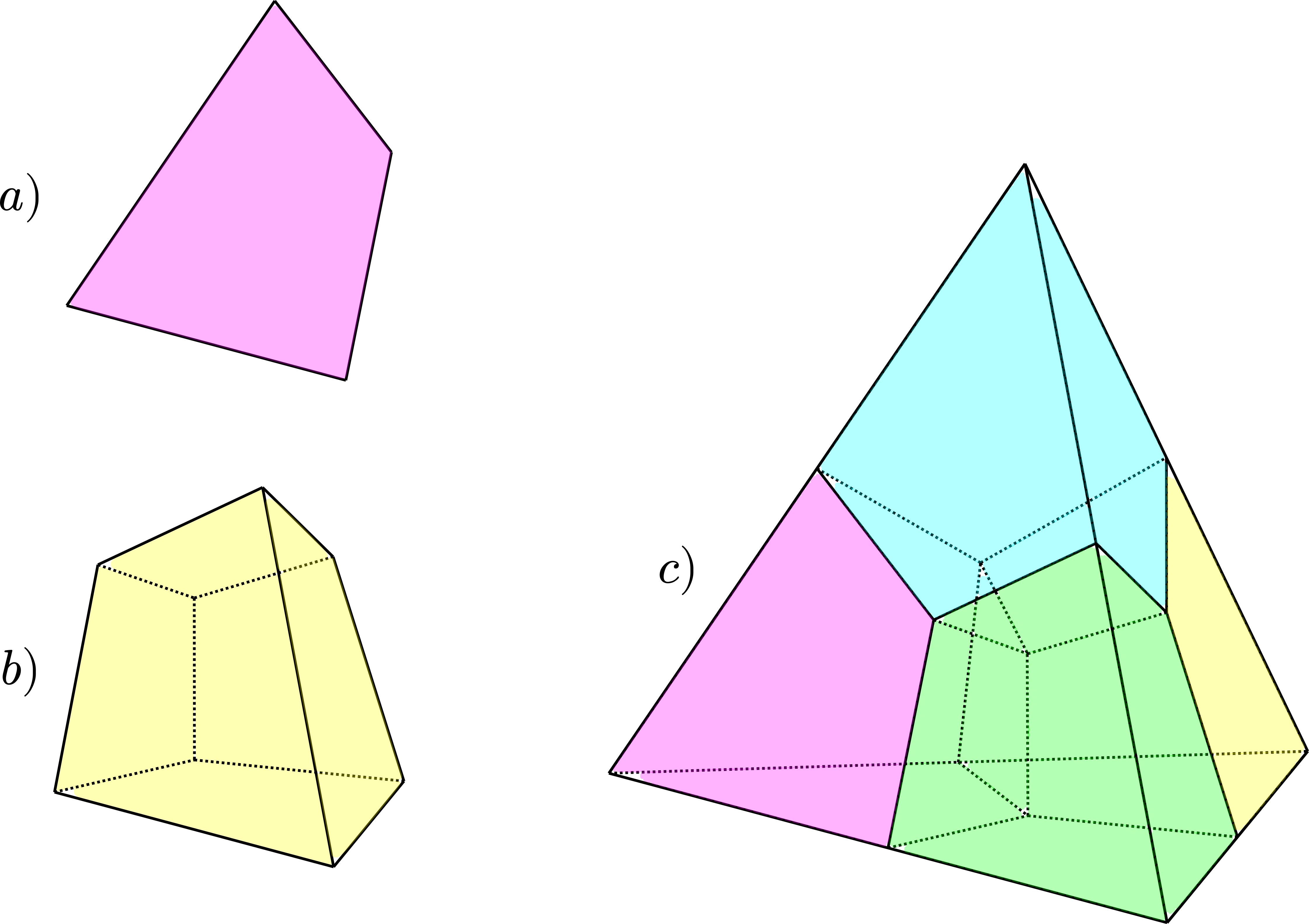}
        \caption{\label{fig:tetrahedral} a) $Z$ stabilizers b) $X$ stabilizers and c) how these elements combine to define the 15-qubit tetrahedral code, with qubits on the vertices.}
    \end{figure}
    Since this code is of distance three, and $Z$ errors commute with $Z$ stabilizers, clearly the measurement of the $X$ stabilizers provides enough information to correct any single $Z$ error. So by symmetry, the $Z$ stabilizers consisting of products of opposite face pairs of $Z$ stabilizer generators, i.e. polytopes shaped the same as $X$ stabilizers and illustrated in Figure~\ref{fig:sufficient_Z}, must also be enough to correct $Z$ errors. This line of reasoning is due to Poulin et al.\cite{poulin2014color}. However, there are only $4$ such stabilizers, as opposed to the $10$ independent $Z$ stabilizer generators.

    \begin{figure}
        \centering
        \includegraphics[width=0.5\columnwidth]{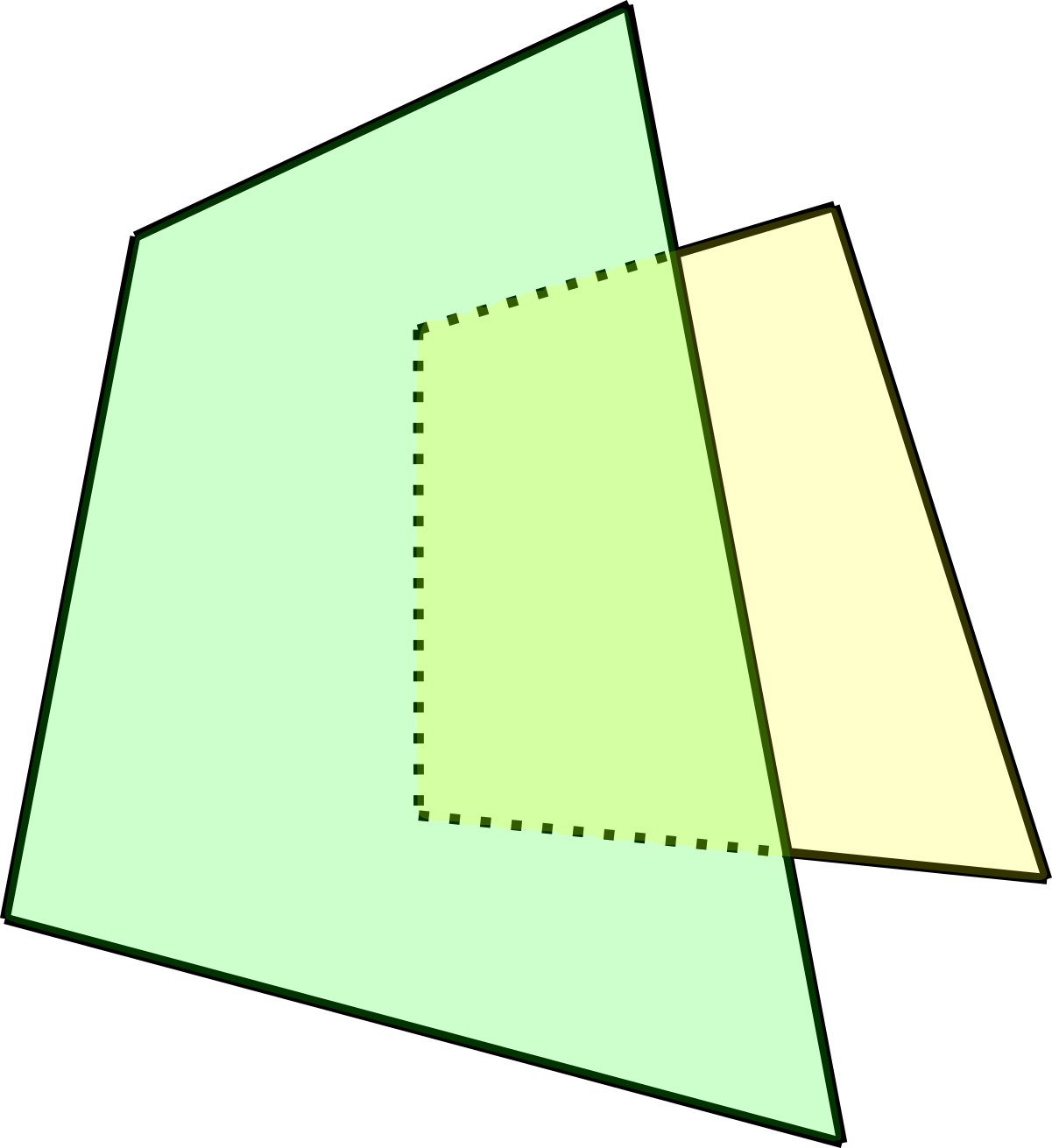}
        \caption{\label{fig:sufficient_Z}An illustration of the fact that the product of two opposite faces of one of the polytopes generates the polytope.}
    \end{figure}

    The fundamental reason for this redundancy may not be immediately apparent. While there are many ways to interpret this fact, one particularly relevant viewpoint is that the 3D color code can be regarded as a gauge fixing of the 3D subsystem color code~\cite{bombin2015color}. This perspective bears a strong connection to our discussion of the measurement schedule defining a subsystem code that the original code is a gauge fixing of, as discussed in section~\ref{sec:single_shot}. This provides a natural explanation for the observed redundancy using our framework.

    To see this, consider taking four sets of three $Z$ stabilizer generators, $S_{GY}, S_{YR}$, $S_{BG}$ and $S_{RB}$, where $S_{GY}$ is the three faces intersected by a line passing perpendicularly through the green and yellow polytopes, $S_{YB}$ the same for yellow and blue, and so on. Three of these subsets are illustrated in Figure~\ref{fig:tetrahedral_subsets}.
    \begin{figure}
        \centering
        \includegraphics[width=0.9\columnwidth]{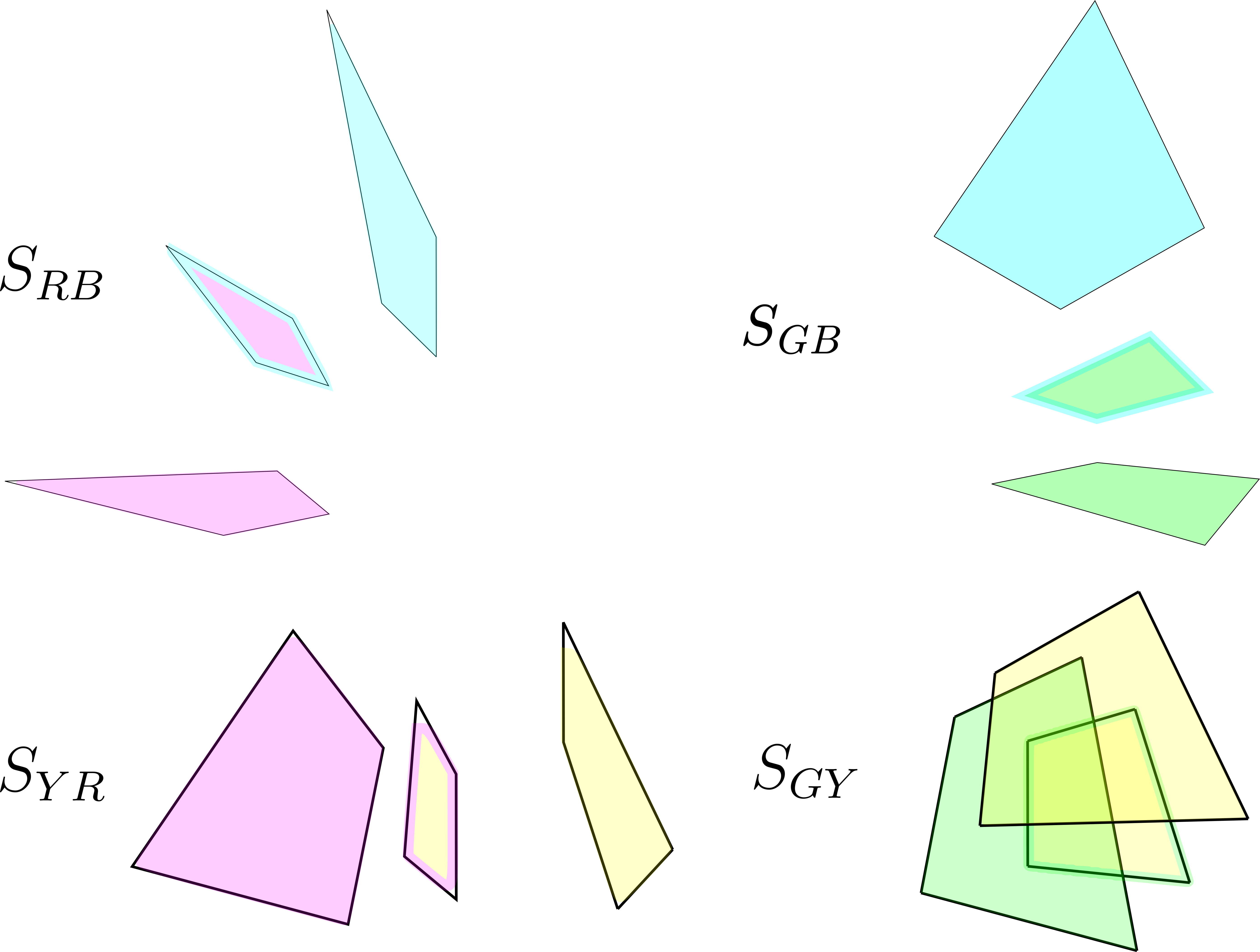}
        \caption{\label{fig:tetrahedral_subsets}Three subsets of the $Z$ stabilizers of the tetrahedral code we consider.}
    \end{figure}

    We now apply the repetition code to each subset. Since there are three stabilizer generators per subset this natively supports a repetition code of distance three. Since none of the stabilizer generators in a subset overlap (within the subset), measuring the checks of the repetition code is enough to deduce the syndrome, with respect to each subset, of any one-qubit data error (and hence correct it). Since the repetition code of distance 3 requires two parity checks, na\"ively we require $2 \frac{\text{checks}}{\text{subset}} \times 4 \text{ subsets} = 8\text{ checks}$. However, we notice that after applying the repetition code checks, we end up with a set of stabilizers which is not independent -- each polytope is measured twice. Measuring them once and reusing the information for each of the subsets is enough to deduce the syndromes of the original weight-4 $Z$ stabilizers. Therefore, we see that our construction applied to the tetrahedral code can find the four $Z$ stabilizers sufficient for the preservation of the distance of the code.
    
    This also suggests another use case for our procedure. We see that, simply by finding and compressing subsets of stabilizers of a given code, we have deduced a gauge code which the tetrahedral code is a gauge fixing of. Since it is known that gauge fixing of gauge color codes can be used for universal quantum computation~\cite{bombin2015color}, this suggests that using the framework presented in this paper to compress other codes may yield candidates for codes related by gauge (un)fixing with desirable properties.

    \section{Single Shot Measurement Schedules} \label{sec:single_shot}
    So far we have focused on compressing the number of measurements required for a single round of syndrome extraction from $n - k$ to $O(d \log(n - k))$, which gives a fault-tolerant measurement schedule of length $O(d^2 \log(n - k))$ as a consequence of the circuit distance~\cite{delfosse2022beyond}. However, due to Earl Campbell~\cite{campbell2019singleshot}, it is known that any set of stabilizers defining a code with distance $d$ can be made single-shot fault tolerant to distance $d$ by some operations on the stabilizers.
    In particular, we use the following fact, which is a direct consequence of Campbell~\cite{campbell2019singleshot} Theorems 1 and 3 and definitions 3 and 5. By $|\cdot|$ we either mean the Hamming weight of a binary vector, or the non-trivial support of a Pauli string, and $\mathrm{wt}_{\min}(\cdot)$ means the minimal support of a Pauli string up to stabilizers.
    \begin{cor}
        For any code with parameters $[[n, k, d = 2t + 1]]$ there exists a check set such that for any binary vector $u$ and Pauli string $E$ satisfying $|u| < t/2$ and $2|u| + |E| < d/2$ there exists a decoder which takes syndrome $s = \sigma(E) + u$ and outputs a recovery operation $E_{\mathrm{rec}}$ such that
        $\mathrm{wt}_{\min}(E_{\mathrm{rec}} \cdot E) \leq 2|u|$.
    \end{cor}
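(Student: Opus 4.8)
The plan is to obtain the statement directly from Earl Campbell's single-shot framework~\cite{campbell2019singleshot}, assembling the named ingredients rather than reproving anything. First I would recall the relevant objects. For a stabilizer code, Campbell's Definition~3 calls a check set \emph{sound} with soundness function $f$ (non-decreasing, $f(0)=0$) if every low-weight syndrome $s$ is the syndrome of some error of weight at most $f(|s|)$; Definition~5 packages the resulting notion of a \emph{single-shot} error-correction gadget, whose guarantee is that on input $\sigma(E)+u$ it returns a recovery $E_{\mathrm{rec}}$ with residual error controlled by a linear function of the measurement-fault weight $|u|$. Theorem~3 of~\cite{campbell2019singleshot} is the construction: starting from \emph{any} check set for the code one may adjoin redundant checks (metachecks) --- linear combinations of existing checks --- to produce a check set that is sound with $f$ linear, without altering the stabilizer group and hence without changing the code distance. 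Theorem~1 then shows that soundness together with confinement yields a decoder with the single-shot property of Definition~5.

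The steps are then as follows. (i) Apply Theorem~3 to the given $[[n,k,d=2t+1]]$ code to obtain the ``there exists a check set'' in the statement; since only redundant checks are added, the resulting check set still defines a code of distance $d$. (ii) Invoke Theorem~1 to get the associated single-shot decoder. (iii) Feed this decoder an input $s=\sigma(E)+u$ with $|u|<t/2$ and $2|u|+|E|<d/2$; these hypotheses place the pair $(E,u)$ in exactly the regime where the sound-decoding step of Theorem~1 succeeds and where the guaranteed residual is still within the correction radius. (iv) Read off the residual bound: the single-shot guarantee gives $\mathrm{wt}_{\min}(E_{\mathrm{rec}}\cdot E)$ bounded by a linear function of $|u|$, and with the explicit soundness slope coming out of Theorem~3 this is $\mathrm{wt}_{\min}(E_{\mathrm{rec}}\cdot E)\le 2|u|$, which is the claim.

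I expect the only real work --- and thus the main obstacle --- to be the bookkeeping of the constants. Campbell states the results with generic soundness functions and threshold radii, so one must verify that the construction of Theorem~3 can be tuned so the final residual bound is precisely $2|u|$ rather than some other multiple, and that the two hypotheses $|u|<t/2$ and $2|u|+|E|<d/2$ are \emph{simultaneously} enough to drive both the classical metacheck-decoding step (correcting $u$) and the subsequent quantum minimum-weight step (correcting $E$ up to stabilizers). A secondary, easier point is noting that the conclusion is stated modulo stabilizers --- hence $\mathrm{wt}_{\min}$ rather than $|\cdot|$ --- which is automatic because adjoining redundant checks leaves the stabilizer group, and therefore the stabilizer-equivalence classes of Pauli errors, unchanged; so the distance-$d$ correctability of a weight-$2|u|$ residual (which needs $2|u|<d/2$, guaranteed by the hypothesis $2|u|+|E|<d/2$) is inherited directly from the code.
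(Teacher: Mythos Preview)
Your proposal is correct and takes essentially the same approach as the paper: the paper does not prove the corollary at all but simply states it as ``a direct consequence of Campbell~\cite{campbell2019singleshot} Theorems 1 and 3 and definitions 3 and 5,'' which is exactly the assembly you outline. One small point worth tightening: the paper's intuition paragraph describes Campbell's Theorem~3 construction not as adjoining redundant metachecks but as \emph{diagonalizing} the stabilizer generators (choosing generators so that qubit $i$ lies only in the support of $g_i$), which is closer to what Campbell actually does there; metachecks are a separate device in~\cite{campbell2019singleshot}. This does not affect the validity of your argument, since all you need from Theorem~3 is the existence of a check set with linear soundness, but it is worth getting the mechanism right when you write it up.
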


    The intuition behind this result is given by performing essentially the same procedure that we performed on the classical parity check matrix in section~\ref{sec:decoding} to put it in its canonical form. It can be shown that given a set of stabilizer generators $G = \{g_i\}$ on $n$ qubits, it is possible to find a set of stabilizer generators equivalent up to qubit relabeling and local Cliffords such that for every $i$, qubit $i$ is only in the support of $g_i$. This effectively diagonalizes the stabilizer generators in the same manner as the row operations and column permutations do in the classical case. Upon following this procedure, it is no longer possible for a measurement error to prompt the application of a high-weight, non-fault-tolerant, correction under minimum weight decoding. This is formalized using Campbell's definition of ``soundness''.
    
    In this section we show that this result applies to our construction, even though we do not measure a complete set of generators for the original code.
    
    Our construction applied to a code $Q$ is sufficient to ensure any codestate $q \in Q$ remains in $Q$ after errors and error correction, which justifies conceptualizing of our sequence of measurements as a fault-tolerant measurement schedule for $Q$. However, if we ignore the underlying structure and just consider the code defined by the stabilizers we actually measure we get a different, but related, code. 
    
    Although this codespace $Q'$ contains $Q$, in general it is larger than $Q$. The additional degrees of freedom are the stabilizers of $Q'$ along with their associated destabilizers or pure errors; the destabilizer $d_i$ associated with the stabilizer $g_i$ is just the unique-up-to-stabilizers operator such that $[g_j, d_i] = \delta_{i,j}$. In this context, $Q'$ is a code defined by a stabilizer group $G'$, which is a subgroup of the stabilizer group defining $Q$, and a gauge group, exactly half of which are stabilizers of $Q$. Fixing the gauge of $Q'$ produces $Q$. 
    
    Considering $Q'$ as a code in its own right shows that Campbell's construction can be applied to achieve single-shot fault tolerance. In this construction, a different set of stabilizer generators of $Q'$ are measured, with the property that for any syndrome of weight one there exists a unique data error of weight one which produces that syndrome. By Campbell's soundness property this is sufficient to ensure single-shot fault tolerance. 
    
    Applying Campbell's result to our construction therefore lifts our result about single-round detectability to a result about single-round fault-tolerance. 
    The errors which are not caught by this schedule of measurements are still just the logical or gauge operators of $Q'$, which of course are either harmless stabilizers or guaranteed to be high-weight by Theorem~\ref{lem:detectability}.
    \begin{cor}
        Applying Campbell's construction to the code defined by our construction yields a sequence of $O(d \log (n - k))$ measurements sufficient for fault-tolerant error correction of an $[[n, k, d]]$ LDPC code, or $O(d \log(d) \log(n - k))$ measurements sufficient for fault-tolerant error correction of an $[[n, k, d]]$ concatenated code.
    \end{cor}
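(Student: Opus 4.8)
The plan is to repackage the per-round measurement set produced by our compression construction as a subsystem code, and then invoke Campbell's soundness result on that subsystem code, being careful that neither step changes the number $r'$ of measurements. First I would record the two relevant instances of the construction. For an $[[n,k,d]]$ LDPC code the maximum syndrome weight satisfies $w\le c(d-1)=O(d)$, so taking $P$ to be the parity-check matrix of a BCH code of distance $w+1$ on $r=n-k$ bits gives a measurement matrix $H_m=PH$ with $r'=O(w\log(n-k))=O(d\log(n-k))$ rows; for a code obtained by $O(\log d)$ levels of self-concatenation (writing $d$ for the distance of the resulting code), the corollary of Lemma~\ref{lem:concat_wt} gives $w=O(d\log d)$ and the same choice of $P$ yields $r'=O(d\log d\log(n-k))$ rows. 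In either case Theorem~\ref{lem:detectability} tells us that every data error $e$ with $H_m e=0$ either has $|e|\ge d$ or satisfies $He=0$, i.e. is a logical operator or a stabilizer of the original code $Q$.

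Next I would identify the code $Q'$ defined by the $r'$ stabilizers $H_m$ as a subsystem code. Since $\langle H_m\rangle$ is a subgroup of $\langle H\rangle$ we have $Q\subseteq Q'$; I give $Q'$ the gauge structure whose gauge group is generated by $\langle H_m\rangle$, by the stabilizers of $Q$ lying outside $\langle H_m\rangle$, and by destabilizers (pure errors) completing these to conjugate pairs, so that gauge-fixing returns $Q$. With respect to this subsystem code, an undetected error is a nontrivial bare logical only if it is not a gauge operator, in which case the previous paragraph forces $|e|\ge d$; conversely, any residual that commutes with $H_m$ and does not have weight $\ge d$ is a gauge operator of $Q'$, hence acts trivially on the logical space of $Q'$ and, after gauge-fixing, leaves a valid state of $Q$. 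In particular the low-weight stabilizers of an LDPC $Q$ become harmless gauge operators of $Q'$, which is precisely why $Q'$ must be treated as a subsystem, not an ordinary stabilizer, code (as an ordinary code its distance would be only $O(1)$).

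Finally I would apply Campbell's construction to $Q'$. Using the canonical-form version of his argument --- diagonalize the generators by local Cliffords and qubit relabeling so that qubit $i$ lies in the support of only the $i$th generator (valid in the regime of interest, where $r'\le n$) --- one obtains an equivalent generating set of the same cardinality $r'$ that is \emph{sound}: every weight-one syndrome comes from a unique weight-one data error, so a single measurement fault cannot trigger a high-weight minimum-weight correction. The cited corollary then supplies, for any Pauli error $E$ and measurement error $u$ with $2|u|+|E|<d/2$, a decoder returning $E_{\mathrm{rec}}$ with $\mathrm{wt}_{\min}(E_{\mathrm{rec}}\cdot E)\le 2|u|$; together with the distance statement above, this says the corrected state differs from the input codestate of $Q$ by an operator supported on at most $2|u|$ qubits and commuting with $H_m$ --- a harmless gauge operator. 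Hence measuring these $r'$ operators once is a single-shot fault-tolerant error-correction procedure for $Q$, and substituting the two bounds on $r'$ gives the claimed $O(d\log(n-k))$ and $O(d\log d\log(n-k))$.

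The step I expect to be the main obstacle is the quantitative bookkeeping around Campbell's check set: a generic soundness-achieving construction could enlarge the set (e.g. by adjoining metachecks or redundant rows), which would defeat the counting claim, so the argument must go through the count-preserving canonical-form route and verify that soundness still holds there when $Q'$ carries the subsystem structure above. A secondary point requiring care is the equivalence ``harmless gauge operator $\Longleftrightarrow$ operator commuting with $H_m$ of weight $<d$'', which rests on applying Theorem~\ref{lem:detectability} to $Q$ (not to $Q'$) and on the specific choice of gauge group, and on checking that composing a weight-$\le 2|u|$ residual with such a gauge operator indeed returns a codestate of $Q$.
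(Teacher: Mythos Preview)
Your proposal is correct and follows essentially the same route as the paper: view the stabilizers produced by the compression as defining a subsystem code $Q'$ of which $Q$ is a gauge-fixing, apply Campbell's canonical-form/soundness construction to $Q'$ without changing the number of generators, and then invoke Theorem~\ref{lem:detectability} to argue that any undetected residual is either a harmless gauge operator or has weight at least $d$. If anything, you are more careful than the paper about the two points you flag---that the soundness-achieving transformation must preserve the generator count and that the subsystem structure is what makes low-weight residuals harmless---which the paper treats only informally.
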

    This justifies our previous statement that this work constructively demonstrates the length $O(d \log(n - k))$ measurement schedule which was previously known 
     only to exist.
    
     We note that we do not expect this construction to achieve a threshold for most codes -- Brown et. al \cite{brown2024single} have shown numerical evidence that a threshold does not exist even just applying Campbell's construction to the surface code, and we do not expect that adding our measurement reduction procedure to substantially improve this picture. We discuss the impact of high-weight stabilizers briefly in appendix~\ref{app:stabilizer_weight}.
     
     \section{Data-Syndrome Codes}\label{sec:data_syn}
     Our work bears some resemblance to the concept of \emph{data-syndrome codes} \cite{ashikmhin2014robus,brown2024datasyndromebch, fujiwara2014datasyndrome}, by which we can make our measurement protocol resistant to measurement errors using classical codes. 
     The resemblance is misleading, however, in that our construction aims to do \emph{the opposite} of data-syndrome codes.
     In the data-syndrome code picture, we notice that one can describe measuring a stabilizer over $d$ rounds as encoding the measurement result in a repetition code with distance $d$; concretely, this corresponds to multiplying a bit representing the stabilizer by the generator matrix for the repetition code. Since we perform the same procedure for each stabilizer generator, the generator matrix describing repeated rounds of measurement is given by $G_d^T \otimes I_{\ell}$, where $G_d$ is the generator matrix for the distance $d$ repetition code, and $\ell = n - k$ as before. Then the sequence of measurements performed is given by $(G_d^T \otimes I_\ell) H$ where $H$ is the stabilizer matrix for the quantum code in question and the multiplication is over $GF(4)$. We interpret each column $c_j$ of the measurement of the stabilizer $\prod_{i\in I} g_i$ where $I$ is the set of all $i$ such that $c_{ij} = 1$. 
      The fact that we use a distance $d$ repetition code means that given $s$ data errors and $p$ measurement errors, we can correctly identify the measurement errors as long as $s + p \leq \lfloor d/2\rfloor$ by first identifying measurement errors using $G_d$ then identifying the data error using the corrected syndrome.
    
    This framing naturally suggests replacing $G_d \otimes I_\ell$ with the generator matrix for a different $[n, \ell, d]$ classical code. Generically this can save many measurements since we only need $d\times \ell < n$ to improve upon the strategy of repeated rounds of measurement. It is worth clarifying exactly how our work differs from this strategy. In our work, we use the parity check matrix of an $[\ell, k, cd]$ classical code, assuming that such a code exists, where $k$ is as large as possible to reduce the number of measurements from $\ell$ to $\ell - k$. We then either repeat $d$ times, or follow the strategy given in section~\ref{sec:single_shot} to make this sequence of measurements fault tolerant. Choosing a classical code with few parity checks directly translates to making few measurements. Instead of encoding the syndrome bits, we compress them. When using a data-syndrome code the number of measurements made, $n$, can never be smaller than the original number of stabilizer generators by the singleton bound $d + \ell \leq n + 1$. Instead of compressing the syndrome bits and then repeating for redundancy, data-syndrome codes are a strategy to make the syndrome bits redundant.
    
    Therefore, these two strategies solve complementary parts of the error correction picture. Following Brown et al. \cite{brown2024datasyndromebch} we can apply the generator matrix for the $[n, k = O(d \log \ell), d]$ BCH code to our construction, where $n = O(k + d \log k) = O(d \log \ell + d \log (d \log \ell)) = O(d \log \ell)$. Concretely, we take the generator matrix for the given BCH code $G_{BCH}$ and simply consider $G_{BCH} P H_x$ and $G_{BCH} P H_z$ (or consider multiplication over $GF(4)$ for the non-CSS case) to yield a measurement schedule of length $O(d \log \ell)$.
    
    This measurement schedule corrects any data error of weight $s$ even in the presence of $p$ measurement errors as long as $s + p < \lfloor d/2 \rfloor$. It is extremely important to realize, however, that this construction is not necessarily fault tolerant, weakly or otherwise. In previous work on data-syndrome codes \cite{ashikmhin2014robus,brown2024datasyndromebch, fujiwara2014datasyndrome}, the authors only prove that the resulting measurement scheme is robust to measurement errors, and do not consider a noise model which includes mid-circuit errors. Therefore despite achieving the same or better scaling as other methods we have proposed in this work, it is not directly comparable since it solves only a portion of the same problem.
    \section{Conclusion} \label{sec:conclusion}
    In this work we have shown, although stabilizer codes
    are often conflated with their stabilizer generators, that if one is given a stabilizer codestate with errors it is possible to identify the errors by measuring a particular set of stabilizers that do not span the codespace. This can be understood by observing that many codespaces are actually larger codespaces up to fixing a gauge. Our proof is constructive, in that a set of measurements sufficient to identify errors are defined in terms of the original stabilizer generators combined according to a classical code. This procedure allows us to relate the ability of our proposed set of measurements to detect errors to the abilities of the classical and quantum codes.
    
    The procedure we give for LDPC codes or concatenated codes is extremely modular, and can be modified to optimize for a variety of desirable characteristics, such as check weight or locality. We illustrate this in section~\ref{sec:alternative} where we give brief examples to show that a reduction in the number of measurements required for the surface code is possible by measuring low-weight stabilizers, and that by choosing subsets of stabilizer generators of an LDPC code carefully even more measurements can be eliminated than in our initial construction. We leave as an open question how to optimize simultaneously for stabilizer weight and number of measurements by careful choice of 1) stabilizer generators, 2) the ordering of stabilizer generators, and 3) the subsets of stabilizer generators chosen for compression.
    
    We also leave the question of efficient decoding mostly open. In our numerical experiments in section~\ref{sec:surface_code} we used an inefficient brute force decoder. In theory, one should expect that the large amount of structure inherent in our construction should admit efficient decoders which take advantage of this structure. In section~\ref{sec:decoding} we propose one efficient two-step decoder, which preserves the distance of the construction under a noise model with measurement errors. However, the qualitative behavior is very different from our brute force decoder, and more work needs to be done to optimize the classical decoder to work well with the quantum decoder.
    
    Finally, we note our method to make our sequence of measurements single-shot fault tolerant is not the only avenue for fault tolerance other than repetition. Applying data-syndrome codes to our construction yields a sequence of measurements which, although not fault tolerant, corrects data errors while being resistant to measurement errors. Extending this construction to account for mid-circuit noise could also yield a fault-tolerant measurement schedule of $O(d \log \ell)$ measurements.
    \section{acknowledgments}
    This work is supported by NSF CAREER award No. CCF-2237356. We would like to thank the UNM Center for Advanced Research Computing, supported in part by the National Science Foundation, for providing the high performance computing resources used in this work. 
    
    \appendices
    \section{BCH Codes}\label{appendix:BCH}
    The parity check matrix of the BCH code can be defined by considering primitive elements of a certain finite field. A primitive element of $GF(2^m)$ is an element $\alpha \in GF(2^m)$ such that every nonzero element of $GF(2^m)$ can be written as $\alpha^i$ for some $i$. The relevant finite field is obtained by taking $m$ to be $\lceil{\log_2 w}\rceil$; the binary parity check matrix then has $mt$ rows, which we will show by constructing the matrix.
    
    The parity check matrix $H'$ can be defined as
    \[
        H' := \begin{pmatrix}
        1 & \alpha      & \alpha^2        & \ldots & \alpha^{w - 1}\\
        1 & \alpha^2    & (\alpha^2)^2    & \ldots & (\alpha^3)^{w - 1}\\
          &             &                 &\vdots\\
        1 & \alpha^{2t} & (\alpha^{2t})^2 & \ldots & (\alpha^{2t})^{w - 1}
        \end{pmatrix}.
    \]
    Every other row of $H$ is redundant since $\sum (\alpha^i)^2 = 0$ iff $\sum \alpha^i = 0$ since $GF(2^m)$ has characteristic two. We can then omit every second row so that $H_{ij} = (\alpha^{2j + 1})^{i + 1}$, for $i,j$ starting at zero.
    
    Taking the binary reprentation of (elements of) $GF(2^m)$ produces a parity check matrix with $mt$ rows. By our choice of $m$ this produces $t\lceil{\log_2 w}\rceil$ rows (or parity checks).
    
    We want to show that $H$ defines a code with distance $d = 2t + 1$. This is equivalent to any set of $2t$ columns of $H$ being linearly independent since this means that any set of $2t$ errors is detectable, hence meaning that any two errors each with weight at most $t$ are distinguishable. Since $H'$ defines the same code as $H$, we can work with it instead for convenience. Suppose for the sake of contradiction that $H'v = 0$ for some $v$ such that $|{v}| \leq 2t$. Then 
    \[\begin{pmatrix}
        \alpha^{j_1} & \alpha^{j_2} & \ldots & \alpha^{j_{|{v}|}}\\
        (\alpha^{j_1})^2 & (\alpha^{j_2})^2 & \ldots & (\alpha^{j_{|{v}|}})^2\\
                     &             & \ddots &                   \\
        (\alpha^{j_1})^{2t} & (\alpha^{j_2})^{2t} & \ldots & (\alpha^{j_{|{v}|}})^{2t}\\
    \end{pmatrix} \begin{pmatrix}
    1 \\ 1 \\ \vdots \\ 1
    \end{pmatrix} = 0\] for all $j_i$ such that $v_{j_i} = 1$.
    Since $|v| \leq 2t$ we can just consider the first $|v|$ equations in this system to produce a square matrix. Factoring out a power of $\alpha^{j_i}$ from the $j_i$ column shows that this equation reduces to the determinant of a Vandermonde matrix being equal to zero, which is impossible. Therefore $H'$, and hence $H$, defines a code with distance at least $2t + 1$.
    \section{High-weight Stabilizers}\label{app:stabilizer_weight}
    It is often acknowledged in a folk-loric sense that if stabilizer weights are too high no threshold will exist for the code they belong to, but we are unaware of a precise statement and proof of this fact. We would like to formalize this intuition with the following lemma.
    \begin{lemma}
        If each two-qubit gate produces a two-qubit depolarizing error with parameter $p$, then measuring an operator of weight $w$ with Shor-style syndrome extraction requires $O(2^w)$ repetitions to ensure the measurement result is correct with probability $>0.99$.
    \end{lemma}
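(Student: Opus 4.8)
The plan is to reduce the claim to a statement about the statistical bias of a single round's reported outcome, then convert that bias into a repetition count by a standard concentration argument. First I would make the Shor-style procedure explicit: to measure a weight-$w$ Pauli operator $S = P_1 \otimes \cdots \otimes P_w$ one prepares a $w$-qubit cat state using $\Theta(w)$ two-qubit gates (possibly followed by a verification step of $\Theta(w)$ more two-qubit gates), applies the $w$ controlled-$P_i$ gates coupling each ancilla to the corresponding data qubit, and reads out the $w$ ancillas in the $X$ basis, so that the product of those $w$ outcomes is the reported eigenvalue of $S$. The structural observation that drives everything is that the reported bit is the \emph{parity} of the $w$ ancilla readouts; hence it is flipped exactly when an \emph{odd} number of the ancillas carries a residual phase ($Z$-type) error at readout time, or an odd number of the $X$-basis measurements misfires. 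An even number of such events leaves the reported bit correct, which is precisely why a single fault does not deterministically corrupt the outcome and why naive ``wait for a clean round'' reasoning is insufficient.

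Next I would bound the per-round error probability. Each of the $M = \Theta(w)$ relevant two-qubit gates and $\Theta(w)$ ancilla measurements independently contributes a parity-flipping event (a $Z$ or $Y$ acting on the pertinent ancilla, or a measurement bit-flip) with some probability $q = \Theta(p)$ — for the depolarizing channel this is a fixed fraction of the fifteen non-identity two-qubit Paulis. The probability that an odd number of the $M$ independent events occurs is $\tfrac12\bigl(1-(1-2q)^{M}\bigr)$, so the probability the round reports the \emph{correct} bit is $\tfrac12\bigl(1+(1-2q)^{M}\bigr)$, i.e.\ the bias toward the truth is $\eta = \tfrac12(1-2q)^{M} = 2^{-\Theta(w)}$, with matching upper and lower bounds on the exponent coming from the matching linear bounds on $M$. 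Here I would also note the reduction that lets us ignore the other error channels: a fault on a controlled-$P_i$ gate that propagates onto the data merely changes \emph{which} syndrome is the ``true'' one rather than corrupting the readout of a fixed syndrome, and cat-state errors that verification catches lead to a discarded (not miscounted) round, so neither affects the computation of $\eta$.

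Finally, repeating the round $N$ times and taking the majority vote is correct with probability at least $1 - e^{-\Theta(\eta^2 N)}$ by a Chernoff bound, so $N = \Theta(\eta^{-2}\log 100) = 2^{\Theta(w)}$ repetitions suffice to push the confidence above $0.99$; conversely, since no statistic of $N$ i.i.d.\ bits each of bias $\eta$ toward the truth can be correct with probability $0.99$ unless $N = \Omega(\eta^{-2})$ (majority being the optimal test for this symmetric hypothesis), the same exponential scaling is also necessary, and absorbing the $p$-dependent constant in the exponent into the $O(\cdot)$ yields the stated $O(2^w)$ bound. The step I expect to be the main obstacle is the middle one: carefully enumerating exactly which fault locations flip the readout parity versus which are absorbed by verification or only induce a data error, and showing that the resulting per-round bias is genuinely squeezed between $2^{-c_1 w}$ and $2^{-c_2 w}$ (so that repetition both is needed and works) rather than, say, decaying faster than every exponential — this requires an elementary but honest accounting of how the depolarizing channel acts through the controlled-Pauli network and the cat-state verification circuit.
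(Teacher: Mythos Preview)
Your proposal is correct and follows essentially the same route as the paper: compute the per-round correctness probability via the odd-parity formula $\tfrac12\bigl(1+(1-2q)^{\Theta(w)}\bigr)$, giving an exponentially small bias, then invoke a concentration bound (you use Chernoff, the paper uses Hoeffding) to translate that bias into an $\eta^{-2}=2^{\Theta(w)}$ repetition count. Your write-up is more careful than the paper's in two respects---you spell out the cat-state circuit and which fault types actually flip the readout parity, and you explicitly argue both sufficiency and necessity of the $2^{\Theta(w)}$ scaling---but the core argument is the same.
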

    \begin{proof}
        Each of the $w$ two-qubit gates introduces an error detected by the measurement with probability $p' = p/16$. The probability that, after all $w$ of these gates, there is no error on the ancilla state is then $\frac{1 + (1 - 2p)^w}{2}$. This can be obtained by examining the recurrence relation $P_x = p + (1 - 2p)P_{x - 1}$ where $x$ is the number of successes.

        Therefore, by Hoeffding's inequality, the square root of the number of repetitions must dominate $\frac{2}{(1 - 2p)^w}$ in order to ensure the bias can be discerned with high probability.
    \end{proof}

    The point of this lemma is that if we wish to obtain a threshold, it is intuitive that the measurement results must be arbitrarily reliable (although in some sense this assumption is violated by the fact that subsystem codes have random measurement results with only deterministic products). If we ask for this, and the weight of the stabilizers grows with the distance, the number of repetitions required grows exponentially with the distance as well. This means that the time spent idling is exponential with the distance, so the effective physical error rate grows with the same scaling as the distance. However, this limited analysis does not explain why concatenated codes, which have stabilizers of weight $O(d)$, can produce a threshold. 
    
    It is also interesting to note that this analysis makes explicit the fact that logarithmically growing stabilizer weight permits polynomially long measurement sequences to yield reliable measurement results. To be explicit, the above lemma shows that $O(2^w)$ measurements of a weight $w$ operator is necessary for the probability that the majority of the measurements is correct to be greater than $0.99$. If $w = c\log(d)$ then the number of measurements necessary is $O(2^{c \log d}) = \text{poly}(d)$.

    It is also interesting to note that this only applies to Shor-style syndrome extraction. In Knill- or Steane- style syndrome extraction, any weight stabilizer has the same measurement error rate, since their measurement results are computed in a classical, error-free, manner from single qubit measurements of an ancilla codestate.

    \bibliographystyle{apsrev4-2}
    \bibliography{refs.bib}
    \end{document}